\newcommand{\red}{\protect\textcolor{red}}
\newcommand{\dosint}[2]{%
  \ifx#1\displaystyle
    \displaysint
  \else
    \normalsint{#1}%
  \fi
}
\newcommand{\displaysint}{\displaystyle\mathsf{s}\mkern-18mu}
\newcommand{\normalsint}[1]{%
  \smallers{#1}\ifx#1\textstyle\mkern-9mu\else\mkern-8.2mu\fi
}
\newcommand{\smallers}[1]{%
  \vcenter{\hbox{$\ifx#1\textstyle\scriptstyle\else\scriptscriptstyle\fi\mathsf{s}$}}%
}
\let\OLDthebibliography\thebibliography
\renewcommand\thebibliography[1]{
  \OLDthebibliography{#1}
  \setlength{\parskip}{0pt}
  \setlength{\itemsep}{0pt plus 0.3ex}
}
\begin{document}
\theoremstyle{definition}

\newtheorem{theorem}{Theorem}

\theoremstyle{definition}
\newtheorem{definition}{Definition}

\theoremstyle{definition}
\newtheorem{corollary}{Corollary}

\theoremstyle{definition}
\newtheorem{proposition}{Proposition}

\theoremstyle{definition}
\newtheorem{lemma}{Lemma}

\theoremstyle{definition}
\newtheorem{claim}{Claim}

\theoremstyle{definition}
\newtheorem{conjecture}{Conjecture}

\theoremstyle{definition}
\newtheorem{observation}{Observation}

\title {Overcoming High Frequency Limitations of \\ 
Current-Mode Control Using a Control Conditioning \\  Approach --- Part I: Modeling and Analysis}
\date{}
\author{Xiaofan~Cui,~\IEEEmembership{Student Member,~IEEE,}
        and~Al-Thaddeus~Avestruz,~\IEEEmembership{Senior Member,~IEEE}
\thanks{The authors are with the Department
of Electrical and Computer Engineering, University of Michigan, Ann Arbor,
MI, 48109 USA (e-mail: cuixf@umich.edu; avestruz@umich.edu).}
\thanks{}}

\markboth{ \hspace{1in}  {\bf MANUSCRIPT}}%
{Shell \MakeLowercase{\textit{et al.}}: Bare Demo of IEEEtran.cls for IEEE Journals}

\maketitle
\thispagestyle{headings}
\pagestyle{headings}

\renewcommand{\figurename}{Fig.}
\begin{abstract}
Current-mode control is one of the most popular controller strategies for power converters. With the advent of wide bandgap devices including GaN and SiC, higher switching frequencies have become more viable at higher power because of lower switching losses. However, the advantage of higher switching frequency for faster, higher bandwidth control is squandered because of current sensor interference. We present a framework for characterizing and analyzing this interference as uncertainties to the controller model. 
These uncertainties introduce additional dynamics and nonlinearity that can result in instability and poor transient performance of the current control loop.  In this paper, we provide a model framework based on a new control conditioning approach that guarantees global stability and a strategy for optimizing transient performance.  In Part II of this paper series, we present the analysis, design, and hardware validation of three effective solutions.
\end{abstract}

\begin{IEEEkeywords}
peak current-mode control, valley current-mode control, digital control, nonlinear control, Lure system, parasitics, ringing, large-signal stability, robustness, switching-synchronized sampled-state space, voltage regulator modules (VRMs), slope compensation
\end{IEEEkeywords}
\section{Introduction}\label{sec:Intro}
\IEEEPARstart{H}{igher} switching frequency for cycle\nobreakdash-by\nobreakdash-cycle current\nobreakdash-mode control unlocks the potential of power converters~\cite{Fernandes2016}. However, peak (or valley) current-mode control frequently fails in dc-dc converters at multi-MHz switching frequencies because the frequency and amplitude of the unwanted signals in the current-sensing become comparable to that of the actual inductor current signal, consequently contaminating the current measurement and in closed loop, destabilizing the inductor current.

Current-mode control remains one of the most popular controller strategies in power management integrated circuits.  With the advent of wide bandgap devices including GaN and SiC, higher switching frequencies have become more viable at higher power because of lower switching losses \cite{Lee2013, Corradini2015}. However, the advantage of higher switching frequency for faster, higher bandwidth control is squandered because of current sensor interference.  At higher power levels, the larger physical size of the electrical circuit and components inevitably leads to greater parasitics, compromising both current sensing and the current controller.

The performance of the current-mode controller becomes hampered as the switching frequency approaches the frequency of interference because it is difficult to separate the interference from the desired current.
Traditional methods, which mainly focus on eliminating the interference, largely slows down the transient response of the high frequency power electronics.
A new approach to deal with the interference is needed to overcome this limitation when the switching frequency is high.

The voltage regulation and reference tracking using current\nobreakdash-mode control are faster than those using voltage mode because current mode results in a lower-order power converter plant \cite{erickson2007, Bao2021}.
Operating at higher switching frequency can further accelerate the transient response, especially when cycle-by-cycle control is employed \cite{Cui2018a}.
High\nobreakdash-speed current-mode control enables wide-bandwidth dc-dc power converters for emerging applications that include high-performance mobile communication systems  \cite{Cui2021apec}, mobile and embedded computing \cite{Cui2018a}, and autonomous vehicle systems \cite{Cui2019c}. Envelope tracking is important in mobile communication systems with high peak\nobreakdash-to\nobreakdash-average power ratios to improve the system efficiency in rf power amplifiers by quickly modulating the dc voltage to match the data \cite{xinbo2020, Lazarevic2019}. Controllers for high-speed voltage scaling of power converters with high granularity and wide dynamic range that can be implemented within integrated circuits for computing are particularly valuable \cite{Cui2018a}.  In autonomous vehicles, the reliability and cost-performance tradeoff of LiDAR is increased by the high-speed power scaling of laser diodes \cite{Cui2019c}.

Beyond fast transient response, high\nobreakdash-frequency current\nobreakdash-mode control also provides dc\nobreakdash-dc converters with high flexibility and reliability in distributed power systems. Current\nobreakdash-mode converters can naturally perform current sharing and are often utilized as building blocks in dc power grids \cite{Wang2018, Chang2020}, multiphase VRMs for CPUs \cite{Svikovic2015, Huang2016c,  Halivni2020}, and More Electric Aircraft \cite{Karanayil2017, Ding2020}.
The active damping characteristics of the current\nobreakdash-mode converters make it convenient to stabilize distributed systems \cite{Li2012a}, which makes them useful in telecommunications and data centers \cite{Roy2020}.
Additionally, converters that have high frequency current-mode controllers offer excellent input voltage disturbance rejection compared to those with voltage\nobreakdash-mode control \cite{erickson2007}, which is helpful in reducing input energy storage requirements in electric vehicle chargers \cite{Abedi2016} and solar converters \cite{alireza2019}.

Current\nobreakdash-mode control using current sensors \cite{Ziegler2009} offers fault tolerance to short circuits and overloads. In comparison, alternatives that include ripple\nobreakdash-based control do not offer the same protection. Compared to control methods that use inductor current estimators \cite{taeed2014compel, Chen2003tpel}, using sensors for current\nobreakdash-mode control at high frequency guarantees fast overcurrent protection \cite{Femia2020, chen2020} even in applications where inductor currents might be severely nonlinear because of inductor saturation \cite{DiCapua2016}.

Sensor interference, which is defined as the unwanted signal from a sensor, is an inescapable non-ideality in current-mode converters as shown in Fig.\,\ref{fig:noisecurrent}.
The performance of the current-mode controller becomes hampered as the switching frequency approaches the frequency of interference making it difficult to separate the inductor current ramp.

The inductor current can exhibit a severe subharmonic behavior (i.e., limit cycle) because of current sensor interference as shown in Fig.\;\ref{fig:subharmonicscurrent}. Subharmonics are illustrations of instability and can be detrimental because the dynamics of the current loop cannot be guaranteed. For example, a lower bound in frequency is difficult to guarantee, hence making it infeasible to design an output filter. Additionally, using the current loop as part of a larger control system (e.g. voltage control) becomes untenable.  We establish that this particular instability phenomena results from interference corrupting the current measurement and in the closed loop, destabilizes the inductor current. We will show in the following sections that the mechanism of this instability has a similarity to the duty\nobreakdash-cycle dependent dynamics that create a stability boundary at 50\% duty\nobreakdash-cycle in fixed-frequency current-mode converters \cite{Redl1981a}.

{\em Control conditioning} encompasses a class of methods described in this paper to mitigate the negative effects of interference on the current controller by repairing its impaired functionality to guarantee stability and optimize transient performance.
In contrast, \emph{signal conditioning} is the traditional method to handle interference by time\nobreakdash-domain or frequency\nobreakdash-domain processing of the sensor signal. For example, in current\nobreakdash-mode control, signal conditioning focuses on processing the current sensor output to closely recover the inductor current ramp; this typically results in an unacceptably conservative design and subsequent under-performance, especially when the interference and switching frequencies are near each other.

Signal blanking \cite {Li2016b}\cite{Chen2015a}, which does not enable the sampling and control until the switching transient decays, avoids interference but is only effective when the interference decays much faster than the switching period. Signal blanking not only slows down the current slew rate and transient response, but also decreases the dynamic range and causes a dead zone, which is often problematic in light\nobreakdash-load operation in peak current\nobreakdash-mode converters and heavy\nobreakdash-load operation in valley current\nobreakdash-mode converters. Additionally, overcurrent protection cannot be performed during blanking. 
Aggressive low\nobreakdash-pass filtering \cite{Simon-Muela2008} is an alternative when signal blanking is inadequate. For this to be effective, the interference frequency needs to be much higher than the switching frequency so that the filter can attenuate the interference without significantly distorting the inductor current ramp. Under\nobreakdash-performance often occurs because the settling time of the voltage\nobreakdash-control loop is bounded from above by the settling time of the filter.  Dampers or snubbers can be used to reduce the ringing from the parasitics, but can be detrimental to the performance of the power circuit in different ways, including efficiency.


We investigate a \emph{control conditioning} approach, which can overcome the performance limitations as higher switching frequencies approach the frequency bands of interference.  Control conditioning is a control\nobreakdash-performance\nobreakdash-oriented design which is superior to signal conditioning in a number of applications. A classic example of control conditioning is slope compensation \cite{Chen2014d}, which is aimed at stabilizing fixed-frequency, peak current-mode control when the duty cycle exceeds 50\%.

In this paper, control conditioning relies on two current mappings --- static current mapping and dynamical current mapping, which can be corrupted by interference, as discussed in Section\,\ref{sec:modeling}.
The static current mapping can be extended to quasi\nobreakdash-static current mapping if the interference is not synchronized with switching. 
Section\,\ref{sec:conclusion_cmc_part1} concludes the paper.
In Part II of this paper series, we use the control conditioning framework to repair these current mappings using three methods. We are able to rigorously stabilize the current controller while optimizing for transient response, which is a key metric for the {\em control performance} of power electronics. 

\begin{figure}
    \centering
    \includegraphics[width=8cm]{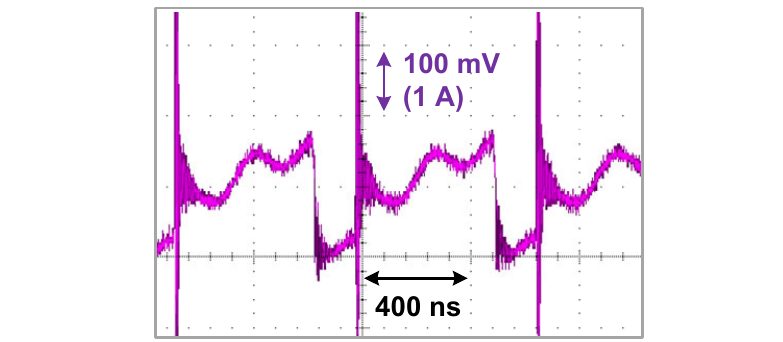}
    \caption{Current sense voltage of a current-mode boost converter using constant off-time (100\,m$\Omega$ current sense resistor). The current sensor output is largely distorted by interference, and the measurement error can be as much as $50\%$. The interference comes from the parasitic ringing and power ground resonance.}\label{fig:noisecurrent}
\end{figure}

\begin{figure}
    \centering
    \includegraphics[width=8cm]{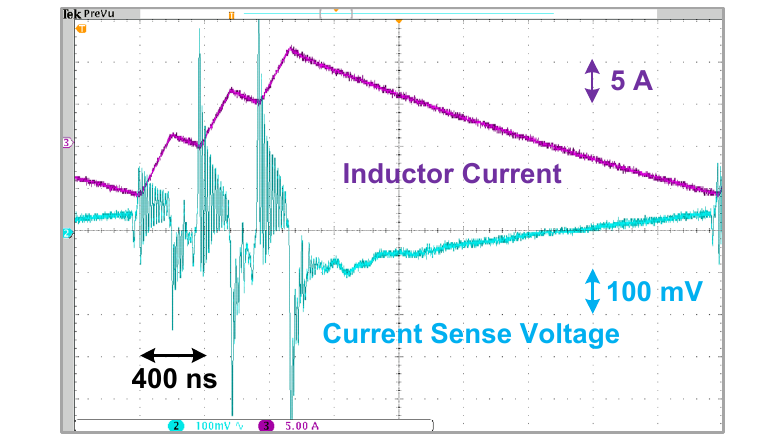}
    \caption{Subharmonics on the inductor current waveform of a current-mode buck converter using constant on-time because the interference severely destabilizes the current controller.}\label{fig:subharmonicscurrent}
\end{figure}

\begin{figure}
    \centering
    \includegraphics[width=8 cm]{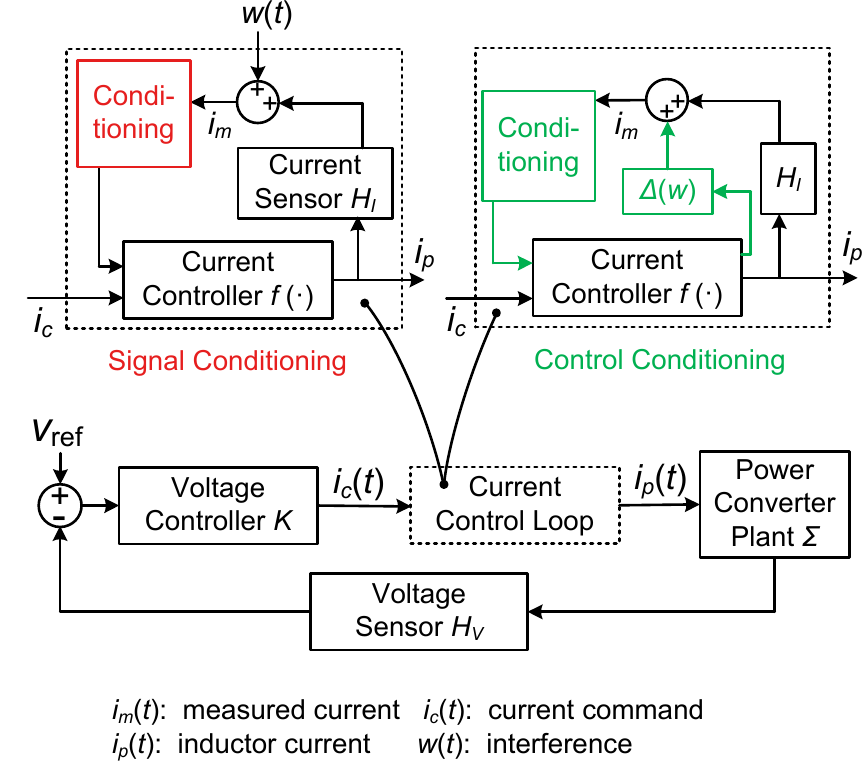} 
    \caption{Current-mode control includes a voltage control loop and a current control loop. The current control loop consists of a current sensor $H_I$ and a nonlinear current controller $f(\cdot)$ which is represented by a state machine in Fig.\, \ref{fig:dynmap_b}. The interference contaminates the current sensor output $i_m(t)$ and degrades the performance of the current control loop.}
    \label{fig:dynmap}
\end{figure}

\begin{figure}
    \centering
    \includegraphics[width= 7cm]{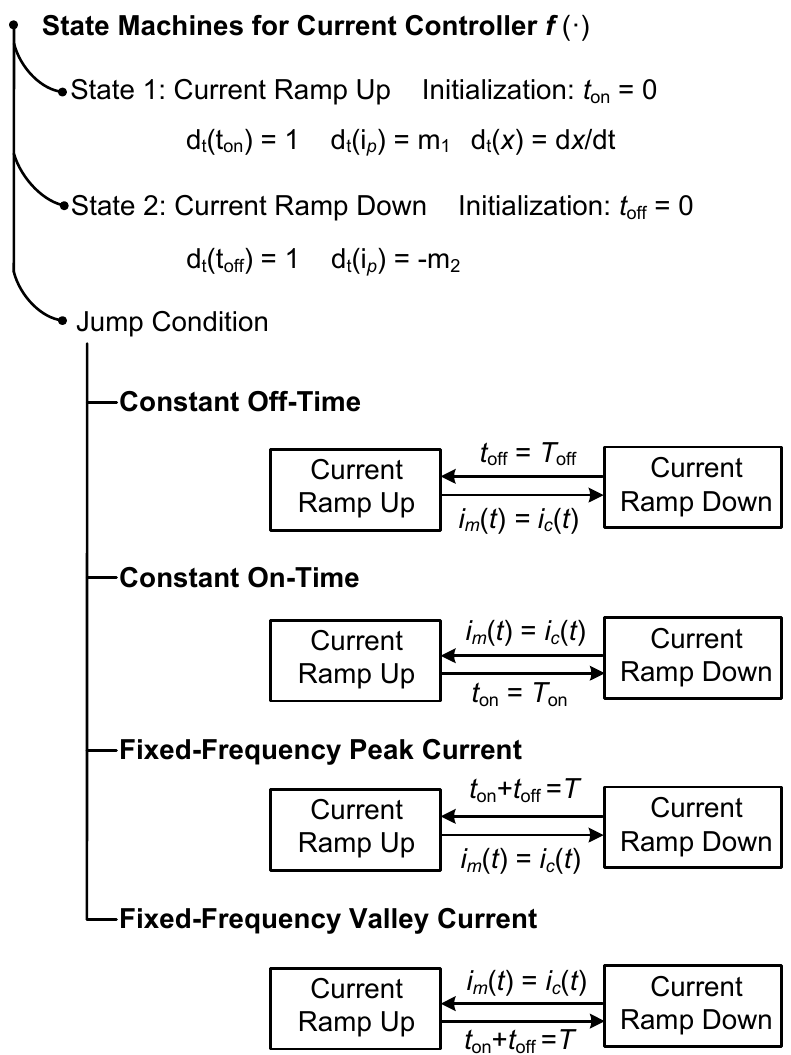} 
    \caption{The nonlinear current controller $f(\cdot)$ can be represented by a state machine.}
    \label{fig:dynmap_b}
\end{figure}
\section{Control Conditioning Approach} \label{sec:modeling}
{\em Control conditioning} overcomes a critical longstanding limitation in switching frequency for peak current mode control.  This approach can outperform {\em traditional signal conditioning} by embedding the interference into the control loop rather than as an input, hence providing a framework for directly optimizing control performance.
For a power converter where the output voltage is controlled, the current controller is a minor feedback loop within the voltage loop \cite{kuo1987automatic}. We refer to this minor loop as the {\em current control loop} as shown in Fig.\,\ref{fig:dynmap}.
The current control loop used for current-mode control consists of a current sensing $H_I$ and nonlinear controller $f(\cdot)$.
The nonlinear controller $f(\cdot)$ can be represented by a state machine \cite{Liberzon2003a} as shown in Fig.\, \ref{fig:dynmap_b}.

In the traditional signal conditioning approach, tools are directly applied to repair the corrupted time-domain current sensing signal $i_m(t)$ so that the signal more closely appears like the ideal, with the expectation that this will improve control performance.  Although this approach is straightforward, this does not necessarily mean that the control performance improves.
\begin{figure}
    \centering
    \includegraphics[width = 6cm]{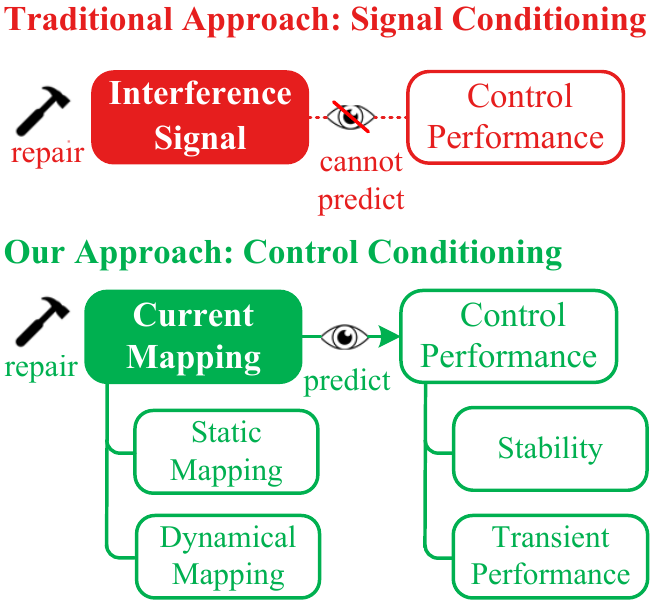}
\caption{Control conditioning strives to repair the current mappings whereas signal conditioning attempts to process the current sensor output. Control conditioning directly dictates control performance through the current mapping, which is in contrast to signal conditioning methods where the control performance cannot be directly designed.}
    \label{fig:ctrlcond_vs_sigcond}
\end{figure}

Control conditioning models the effect of interference on a peak current sensor as uncertainties $\Delta(w)$ in the current control loop. The current control loop maps the command current $i_c(t)$ to the actual peak inductor current $i_p(t)$ as shown in Fig.\,\ref{fig:dynmap}. This enables the use of more powerful tools within control theory to attain better performance.
Instead of directly repairing the time-domain current-sense waveform, the control conditioning approach uses tools to repair the degraded current control loop. The {\em static current mapping} and {\em dynamical current mapping} are the two related mappings that determine the behavior of the current control loop when the loop is affected by interference. A comparison between the control conditioning approach to traditional signal conditioning is summarized in Fig.\,\ref{fig:ctrlcond_vs_sigcond}.

The methods that are discussed in this paper often refer to peak current mode control with its corresponding constant off time, but these methods also directly apply to valley current mode control and its corresponding constant on time.  Additionally, the analysis and control conditioning methods discussed apply to fixed frequency converters as well. In the following several sections, this is discussed.

\subsection{Interference} \label{sec:ctrlcond_intf}
Interference can result from mechanisms that are either or both endogenous, such as ringing and nonlinearities from parasitics in power and sensing circuits, or exogenous, such as magnetic and electric field coupling. 
Parasitics that include the equivalent series inductance of the output capacitor, self\nobreakdash-resonance of the power inductor, and the output capacitance and lead inductance of the semiconductor switches can be sources of interference.
Current sensors with its parasitic inductance directly in the current path can contribute to ringing. Non-contact sensors like current transformers can be troublesome as well because of self-resonance \cite{McNeill2004}.  Other non-contact varieties (e.g. Hall or GMR sensors \cite{Singh2008a}) can also pick up interference.

Unlike strict circuit noise \cite{lee2003design}, which is purely an unbounded stochastic process, interference can be modeled as a deterministic bound.  For example, the waveform for the interference in Fig.\,\ref{fig:noisecurrent} has an underlying process (e.g. resonant ringing) that is deterministic, or in other cases can be random, but with a bound on the amplitude and bandwidth.  This model for interference permits using analysis tools for deterministic systems, making the control problem tractable.  Analysis tools that are deterministic use bounds on uncertainty, which can be considered {\bf\it worst-case analysis}.

The effect of interference on a peak current controller depends on whether the trigger instance for the peak current command is correct at each individual cycle.
The error from interference is characterized by an error $\Delta t$ in the trigger instance, which can be either early or late.
This trigger instance error only depends on the deviation $\Delta x$ of the actual signal from the ideal signal as shown in Fig.\,\ref{fig:Additive}.  The control conditioning framework outlined in this paper applies to a wide range of interference that is encountered.
\begin{figure}
    \centering
    \includegraphics{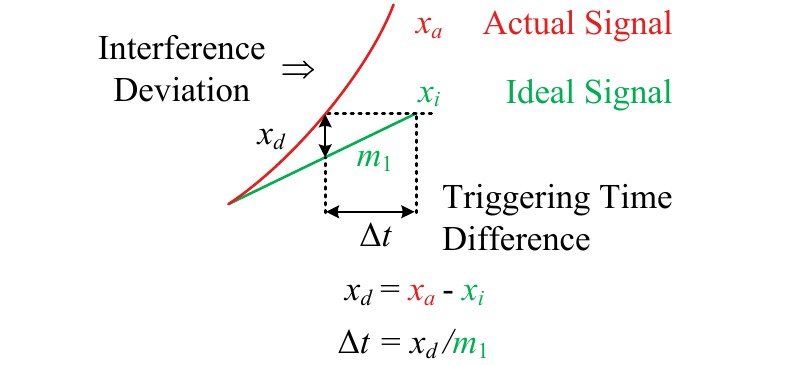}
    \caption{Each independent trigger instance deviates from the ideal signal because of interference.  This deviation can be modeled as additive.}
    \label{fig:Additive}
\end{figure}

\subsection{Static Current Mapping} \label{sec:static_map}
The static current mapping is a function from the current command $i_c$ to the actual peak inductor current $i_p$ in the periodic steady state as shown in Fig.\,\ref{fig:inf_statmap}.
A broad class of interference can be modeled by a static current mapping. 

In the static current mapping, $i_p = i_c$ if the current sensing is ideal. Given different non-idealities in the current sensing, the static current mapping can be deformed differently as illustrated in Fig.\,\ref{fig:inf_statmap}. For example, a sensor gain error can be expressed as an overall gain error for the current control loop. This does not affect the stability nor performance of the current control loop, but rather it affects the encompassing voltage control loop and can be repaired with another gain block.  An offset error in the current sensor appears as an overall offset error for the current control loop;  this offset error appears as a disturbance to the voltage control loop that is typically corrected by the inclusion of an integral controller \cite{erickson2007}.
\begin{figure}
    \centering
    \includegraphics[width = 8cm]{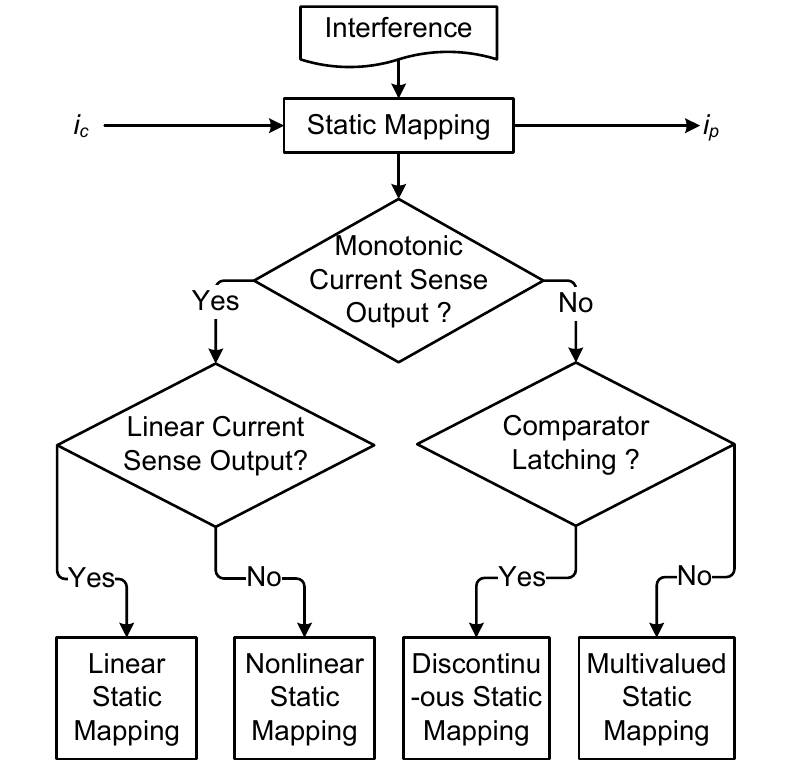}
    \caption{Depending on the type of interference on current sensor, the linear static current mapping can be deformed into nonlinear, discontinuous or multivalued mapping.} \label{fig:inf_statmap}
\end{figure}

Interference corrupts this ideal static mapping by creating discontinuities and other nonlinearities.
The most prevalent nonlinearity that arises from current-peak detection based on a hardware comparator, shown in Fig.\,\ref{fig:multieventtrigger}, is multivaluedness.
Because interference can cause the inductor current ramp signal to cross the comparator threshold multiple times within a switching cycle, the actual peak current can vary from cycle to cycle despite a non-varying command because the event latching depends on propagation delays and the digital clock as shown in Fig.\,\ref{fig:multieventtrigger}. This multivalued peak current can result in large deviations in the output of the outer voltage control loop.  Additionally, the stability of the outer voltage loop requires the static map to be monotonic.

To use the control conditioning methods for dynamical mappings, the static current mapping can be nonlinear, but must be {\it{proper}}:
\begin{enumerate*}
\item  $\alpha$-sector-bounded;
\item single-valued;
\item smooth;
\item monotonic.
\end{enumerate*}
\begin{definition}
    A function $h\,:\mathbb{N}\,\times \mathbb{R} \rightarrow \mathbb{R}$ is said to be \emph{$\alpha$\nobreakdash-sector\nobreakdash-bounded} to $[K_{lb}, K_{ub}]$ if there exists a constant $\alpha$ such that
    \begin{align}
        \left[ h(n,x) - K_{lb} x - \alpha \right]\left[ h(n,x) - K_{ub} x - \alpha \right] \le 0.
    \end{align}
\end{definition}
The static current mapping is $\alpha$-sector-bounded if the interference is bandwidth-limited and amplitude-limited as shown in Appendix \ref{ref:sec_boun_Tmap}.  The degree of nonlinearity is defined as the largest absolute fractional deviation of the relevant static mapping from the ideal static mapping.

The control conditioning strategy that we introduce for the static current map is to first repair the multivaluedness so that it becomes single-valued, but possibly discontinuous, as shown in Fig.\,\ref{fig:w_o_latch}(a). We then use another control conditioning method to smooth the discontinuity, as shown in Fig.\,\ref{fig:w_o_latch}(b). A discontinuity is problematic in that it can spawn limit cycles in the outer voltage control loop.

The multivalued mapping from a hardware comparator threshold detector can be repaired so that it is single-valued by using {\em first-event triggering with latching}.

\subsection{First-Event Triggering with Latching}  \label{sec:fet_latching}
The traditional design of peak current-mode control typically does not carefully determine the triggering criterion. For example, not definitively determining an event when there are multiple crossings within a switching cycle; in Fig.\,\ref{fig:multieventtrigger}, the static mapping is multivalued, in other words indeterminate. The choice of triggering criterion is important because it can significantly affect the stability of the current control loop.
\begin{figure}
    \centering
    \includegraphics[width = 6cm]{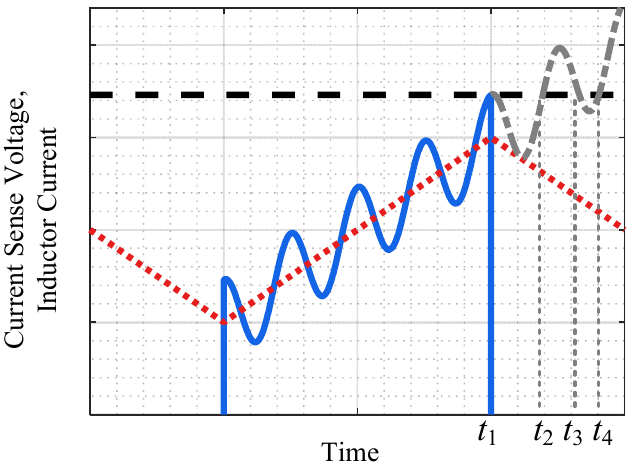}
    \caption{First-event triggering with latching using a comparator prevents the multiple instances when the current sensor waveform crosses the trigger from causing an indeterminate event instance.
    ({\color{blue}---}) and ({\color{gray}-\,$\cdot$\,-}) represent the current-sense voltage; ({\color{red}$\cdots$}) represents the inductor current; ({\color{black}-\,-\,-}) represents the current command.}
      \label{fig:multieventtrigger} 
\end{figure}
\begin{figure}
    \centering
    \includegraphics[width = 5cm]{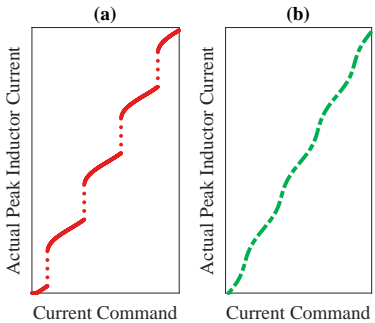}
    \caption{Static mapping using first-event triggering with latching results in: (a)\,a discontinuous and nonlinear, but single-valued static mapping, which can be transformed with further control conditioning to: (b)\,a smooth and nonlinear static mapping.}
    \label{fig:w_o_latch}
\end{figure}

We introduce a particular triggering criterion as part of control conditioning ---  first-event triggering with latching.  We show that this criterion makes a multivalued static mapping single-valued, although possibly discontinuous as illustrated in Fig.\,\ref{fig:w_o_latch}(a). First-event triggering with latching is not the only method to solve the problem of multivalued static mapping.
More sophisticated triggering criteria are possible using information about the structure of the interference, e.g. event-triggering at the $n^{\text{th}}$ threshold-crossing or using higher complexity decisions. We use first-event triggering with latching because it is the most straightforward event-triggering mechanism.

First-event triggering with latching can be implemented using a comparator to detect a threshold crossing and a flip\nobreakdash-flop to latch this first event as shown in Fig.\,\,\ref{fig:eboostschmatic}.
For example, in Fig.\,\ref{fig:multieventtrigger}, the comparator output changes state at $t_1$, $t_2$, $t_3$, and $t_4$.  The D flip\nobreakdash-flop always latches the first rising edge of the comparator; this ensures that only the edge at $t_1$ triggers the event.
The D flip\nobreakdash-flop is then reset by the digital controller at the beginning of the next switching cycle. The resulting static mapping, as shown in Fig.\,\ref{fig:w_o_latch}(a), is a single\nobreakdash-valued function.

For a fixed\nobreakdash-frequency peak current\nobreakdash-mode converter, first\nobreakdash-event\nobreakdash-triggering with latching is inherently implemented in a typical controller as shown in Fig.\,\ref{fig:fixedfreqpccboost}.  The RS flip\nobreakdash-flop always latches the first level\nobreakdash-change of the comparator output.  The RS flip\nobreakdash-flop is then reset by the fixed\nobreakdash-frequency clock at the next switching cycle.

Proposition \ref{th:equexcond} articulates the relationship between the static mapping and current sensor's time\nobreakdash-domain output signal.
The static current mapping definitively determines the actual current from the current command, 
whereas the current sensor output evolves with time.  
Although these two functions belong to different domains, they are related by the triggering criterion.  
Our criterion of first\nobreakdash-event\nobreakdash-triggering with latching preserves an important analytical property between the static mapping and current sensor output function.  The proof is found in Appendix \ref{proof:equexcond}. 
\begin{proposition} \label{th:equexcond}
For a current control loop using either constant off\nobreakdash-time current control or fixed-frequency peak current\nobreakdash-mode control, along with first\nobreakdash-event\nobreakdash-triggering with latching,
the static mapping $\mathcal{T}$ is a strictly monotonically increasing and continuous mapping if and only if the current sensor output function \mbox{$m_1t+w(t)$} is strictly monotonically increasing and continuous, where $m_1$ is the rising slope of the inductor current ramp and $w(t)$ is the interference signal.
\end{proposition}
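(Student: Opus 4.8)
The plan is to reduce the claim about the static map $\mathcal{T}$ to a statement about the first-event trigger time, and then to recognize that trigger time as a \emph{generalized inverse}, so that a classical monotone-inverse duality does most of the work. First I would fix a single switching cycle, take the start of the on-phase as $t=0$, and write the sensed ramp as $s(t)=m_1 t+w(t)$. First-event triggering with latching turns the switch off at the earliest threshold crossing, so the trigger time is
\[
 t^*(i_c)=\inf\{\,t\ge 0 : s(t)\ge i_c\,\}.
\]
The actual peak current is the clean inductor-current value at that instant, i.e. $\mathcal{T}(i_c)=m_1\,t^*(i_c)$ up to an additive steady-state offset; since $m_1>0$ this is affine and strictly increasing in $t^*$, so $\mathcal{T}$ is strictly increasing and continuous if and only if $t^*$ is. I would also note that constant off-time and fixed-frequency peak current-mode control pose exactly the same rising-edge crossing problem — they differ only in how the cycle is timed, which does not affect the crossing structure — so one analysis covers both. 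Finally, $t^*$ is automatically non-decreasing in $i_c$, because raising the threshold can only delay the first passage.

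For the easy ($\Leftarrow$) direction I would assume $s$ is strictly increasing and continuous. Then $s$ is a homeomorphism of the cycle interval onto its range, the threshold is met exactly once, and $t^*=s^{-1}$; the inverse of a strictly increasing continuous function is again strictly increasing and continuous, so $\mathcal{T}=m_1\,s^{-1}$ (plus offset) inherits both properties.

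The substance is the ($\Rightarrow$) direction, which I would prove by contraposition. The key observation is that $t^*$ depends on $s$ only through its running maximum $S(t)=\sup_{0\le\tau\le t}s(\tau)$: the sets $\{t:s(t)\ge i_c\}$ and $\{t:S(t)\ge i_c\}$ share the same infimum, so $t^*$ is precisely the left-continuous generalized inverse of the non-decreasing function $S$. For such an inverse the standard monotone duality holds: $S$ is continuous iff $t^*$ is strictly increasing, and $S$ is strictly increasing iff $t^*$ is continuous. I would then show that every way in which $s$ fails to be strictly increasing or continuous propagates to $S$: an interval on which $s$ is flat, or any dip (including a downward jump), leaves $S$ constant on an interval, which by the duality makes $t^*$ discontinuous; an upward jump of $s$ is an upward jump of $S$, which makes $t^*$ constant on an interval and hence not strictly increasing. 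In every case $t^*$, and therefore $\mathcal{T}$, fails to be both strictly increasing and continuous, which is the contrapositive. Conversely, the same implications show that $S$ strictly increasing and continuous forces $s=S$ (a dip or flat would flatten $S$), closing the equivalence cleanly.

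The main obstacle I anticipate is exactly this running-maximum subtlety: first-event triggering erases all information about $s$ below its running max, so one cannot naively invert $s$ in the general (non-monotone) case, and one must argue that niceness of $t^*$ propagates back to $s$ itself rather than merely to $S$. Getting the case bookkeeping right — tracking precisely which defect of $s$ breaks \emph{strict monotonicity} of $\mathcal{T}$ versus which breaks \emph{continuity} — is the delicate accounting; the remaining points (existence of a crossing within the cycle and behavior at the cycle endpoints) are routine once the domain is fixed.
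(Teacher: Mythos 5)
Your proposal is correct, and it takes a genuinely different route from the paper's own proof. The paper first notes that first\nobreakdash-event triggering makes $\mathcal{T}$ monotone, so that continuity of $\mathcal{T}$ reduces to surjectivity (the intermediate\nobreakdash-value characterization of monotone maps); it then proves that $\mathcal{T}$ is onto if and only if $m_1 t + w(t)$ is strictly monotone, the forward implication by exhibiting, for every on\nobreakdash-time $t_{\text{on}}$, the unique command $I_c = I_v + m_1 t_{\text{on}} + w(t_{\text{on}})$ that achieves it, and the converse by contradiction, using the fact that under first\nobreakdash-event triggering a later trigger instant requires a larger command. You instead formalize the trigger instant as the generalized inverse of the running maximum $S(t) = \sup_{0 \le \tau \le t} s(\tau)$ and appeal to the classical flat/jump duality for monotone functions, then check that every defect of $s$ propagates to $S$ and hence to $t^*$. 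The two arguments are cousins---both are at heart monotone\nobreakdash-inverse dualities---but your decomposition is the more robust and complete of the two: it covers a discontinuous sensor output (an upward jump of $s$ flattens $t^*$ and destroys strict monotonicity of $\mathcal{T}$), whereas the paper tacitly treats the first crossing as an equality $I_c = s(t_{\text{on}})$, i.e.\ it implicitly assumes $s$ continuous and never addresses that hypothesis; and it proves, rather than asserts, the strictness half of the claim, cleanly attributing which defect breaks which property (dips and flats of $s$ break continuity of $\mathcal{T}$, jumps of $s$ break strictness). What the paper's route buys in exchange is economy and physical transparency: no running\nobreakdash-maximum machinery, and the surjectivity viewpoint maps directly onto the engineering failure mode (an unreachable band of peak currents causes the outer voltage loop to hunt). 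The one caution for your write\nobreakdash-up is to fix the command domain explicitly so that the endpoint and crossing\nobreakdash-existence cases you deferred remain as routine as you claim.
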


Proposition 2 is analogous to Proposition 1 for
the current control loop of constant on\nobreakdash-time current-mode control and fixed-frequency valley current-mode control and can be similarly proven.
\begin{proposition} \label{th:equexcond_ff}
For a current control loop using either constant on\nobreakdash-time current control or fixed-frequency valley current\nobreakdash-mode control, along with first\nobreakdash-event\nobreakdash-triggering with latching,
the static mapping $\mathcal{T}$ is a strictly monotonically increasing and continuous mapping if and only if the current sensor output \mbox{$-m_2t+w(t)$} is strictly monotonically increasing and continuous, where $m_2$ is the falling slope of the inductor current ramp and $w(t)$ is the interference signal.
\end{proposition}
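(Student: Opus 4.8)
The plan is to carry over the proof of Proposition~\ref{th:equexcond}, exploiting that valley operation is the mirror image of peak operation. Write $g(t) = -m_2 t + w(t)$ for the current sensor output named in the statement, and fix the periodic steady state of the constant on\nobreakdash-time (or fixed\nobreakdash-frequency valley) loop. First\nobreakdash-event triggering with latching fires at the first instant the sensed ramp reaches the command $i_c$, so the operative object is the hitting\nobreakdash-time (first\nobreakdash-crossing) function $\tau(i_c) = \inf\{t : g(t) = i_c\}$, and the actual valley current $i_p$ is the true inductor ramp evaluated at $\tau(i_c)$. The steady\nobreakdash-state relation between $i_p$ and $\tau$ is affine, with orientation fixed by the sign of the ramp, so $\mathcal{T}$ is strictly monotone and continuous precisely when $\tau$ is. The whole statement then reduces to the equivalence between regularity of the first\nobreakdash-crossing map $\tau$ and regularity of $g$.

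The reverse direction is the easy half. If $g$ is continuous and strictly monotone on the operating interval, then $\tau$ is a genuine inverse $g^{-1}$, itself continuous and strictly monotone by the inverse\nobreakdash-function theorem for monotone continuous maps; composing with the affine ramp relation produces a strictly monotone, continuous $\mathcal{T}$. This half is in fact immediate as a corollary of Proposition~\ref{th:equexcond}: reflecting the current and the command, $i \mapsto -i$, carries the valley loop to a peak loop and $g$ to a triggering function of the form treated there, turning the downward threshold crossing into an upward one, so the earlier result applies verbatim after unwinding the reflection.

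The substantive direction is the contrapositive: showing that any failure of $g$ to be continuous and strictly monotone forces $\mathcal{T}$ to fail as well, even though the latch discards every crossing after the first. Here I would use the latching decisively: $\tau(i_c)$ depends only on $g$ up to its first crossing of the level $i_c$. If $g$ has a local reversal (so it is not strictly monotone), then as $i_c$ sweeps past the value at that turning point the first crossing must leap across the reversed segment, producing a jump in $\tau$ and hence a discontinuity in $\mathcal{T}$; if $g$ has a jump discontinuity, an interval of command levels is either attained simultaneously or skipped, destroying injectivity or continuity of $\tau$ and therefore of $\mathcal{T}$. I expect this jump argument to be the main obstacle, since verifying that a non\nobreakdash-monotonicity lying downstream of the first crossing nonetheless surfaces as a genuine discontinuity of the first\nobreakdash-crossing map requires a careful case analysis at the turning points and at jumps, rather than a single formula.

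The remaining work is routine and mirrors the peak\nobreakdash-mode case exactly: translating between $\tau$, the affine ramp relation, and the steady\nobreakdash-state offset, and checking that the operating interval of $i_c$ sweeps enough of the range of $g$ for the equivalence to be stated without boundary caveats. Because each of these steps is the reflected image of a step already carried out for Proposition~\ref{th:equexcond}, I would present them compactly and refer back to that proof for the identical bookkeeping.
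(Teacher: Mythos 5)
Your proposal is correct, and at the top level it does exactly what the paper does: the paper's entire proof of this proposition is the single remark that it is analogous to Proposition~\ref{th:equexcond} and ``can be similarly proven,'' which is precisely your mirror-image reduction $i \mapsto -i$. Where you genuinely diverge is inside the mirrored argument. The paper's proof of Proposition~\ref{th:equexcond} (Appendix~\ref{proof:equexcond}) first \emph{asserts} that first-event triggering with latching makes $\mathcal{T}$ monotonically increasing, then reduces continuity of $\mathcal{T}$ to surjectivity, and finally proves the two implications between surjectivity of $\mathcal{T}$ and strict monotonicity of the sensed ramp, with the hard direction done by contradiction using the ordering of trigger instants that latching enforces (larger command $\Rightarrow$ later first crossing). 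You never pass through surjectivity: you reduce regularity of $\mathcal{T}$ to regularity of the first-crossing map $\tau$ via the affine ramp relation, get the easy direction from the continuous-inverse property of strictly monotone continuous functions, and get the hard direction by a contrapositive jump analysis of $\tau$ at reversals and discontinuities of $g$. The content is the same --- your ``first crossing leaps across the reversed segment'' is the paper's ordering contradiction in different clothing --- but your route derives the monotonicity of $\mathcal{T}$ rather than assuming it from latching, and it treats continuity of the sensed waveform explicitly, which the paper's appendix proof never does (it only manipulates monotonicity); the price is the case analysis you flag at turning points, plateaus, and jumps. One point a careful execution of your plan will surface: for a falling ramp the equivalence comes out as $\mathcal{T}$ increasing and continuous if and only if $-m_2 t + w(t)$ is strictly \emph{decreasing} and continuous (equivalently $m_2 t - w(t)$ strictly increasing); the word ``increasing'' in the proposition's statement is inherited verbatim from the peak-mode case, and it is exactly your bookkeeping of ``orientation fixed by the sign of the ramp'' that reconciles the statement with what the proof actually delivers.
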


Guaranteeing stability using the dynamical mapping in Section\,\ref{sec:dyn_map} requires the static mapping to be monotonic and continuous.  Proposition\,\ref{th:equexcond} and Proposition\,\ref{th:equexcond_ff} indicate that the continuity and monotonicity of the static mapping is equivalent to the continuity and monotonicity of the current sensor output signal. This correspondence between the time-domain signal and the static mapping means that within the dynamical mapping, the dynamical system which operates on the output signal can be used to ensure continuity and monotonicity of the static mapping.

\subsection{Quasi\nobreakdash-Static Current Mapping} \label{sec:qs_mapping}
A quasi-static (QS) current mapping $\breve{T}$ is related to the static current mapping $\mathcal{T}$. The QS current mapping is a family of cycle\nobreakdash-by\nobreakdash-cycle functions from the current command $i_c$ to the actual peak inductor current $i_p$ as shown in Fig.\,\ref{fig:qsmapping}, and applies to a class of interference that does not repeat every switching period.  Fortunately, the guarantee of stability of the dynamical model discussed in Section\,\ref{sec:dyn_map} does not require the QS mapping to be static; it only requires that the QS current mapping be sector\nobreakdash-bounded and the members satisfy the same properties of a proper static map.
\begin{figure}[ht]
\centering
\subfigure[Schematic]{
\includegraphics[width= 8 cm]{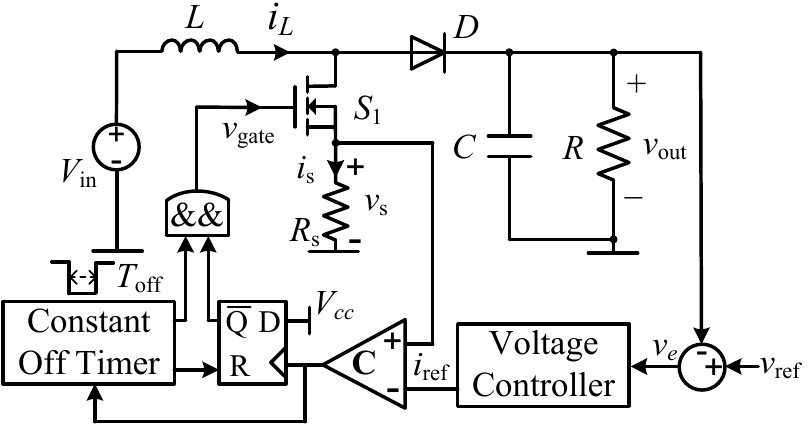} 
\label{fig:eboostschmatic}
}
\subfigure[Inductor current and capacitor voltage]{
    \includegraphics[width=8cm]{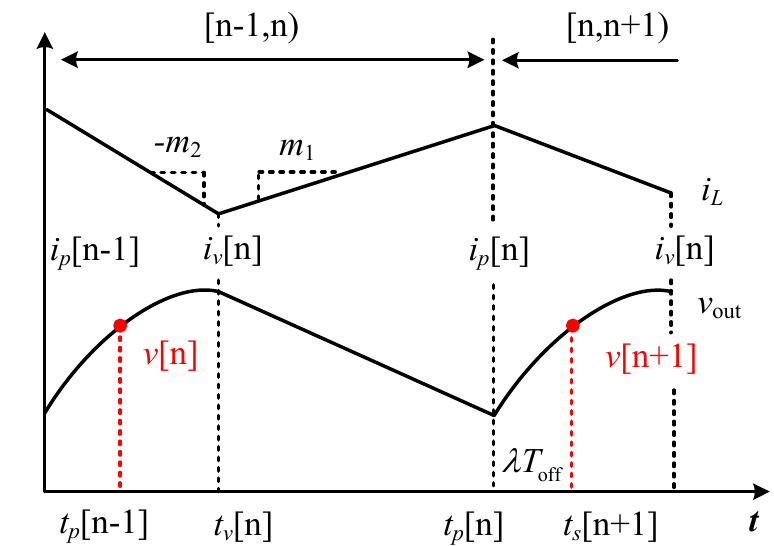}
    \label{fig:eboostwaveform}
}
\caption{Current\nobreakdash-mode boost converter using constant off\nobreakdash-time requires a reference frame for the non\nobreakdash-uniform sampling of peak current and output voltage.}
\end{figure}
\begin{figure}[ht]
\centering
\begin{subfigure}[Schematic] 
    {\centering
    \includegraphics[width = 8cm]{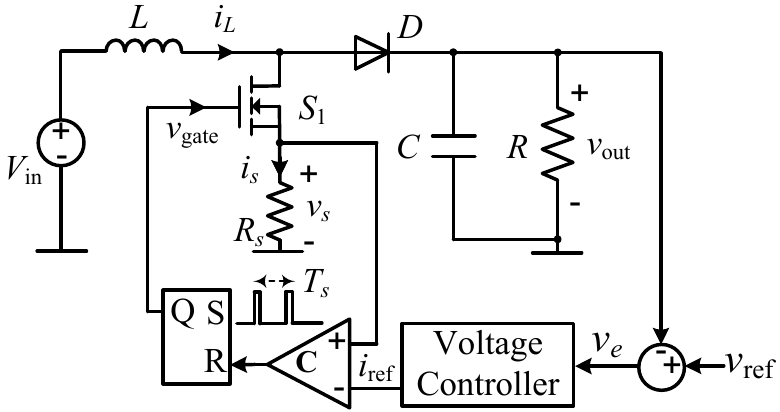}
    \label{fig:fixedfreqpccboost}
    }
\end{subfigure}
\newline
\begin{subfigure}[Capacitor voltage and inductor current]
    {
    \centering
    \includegraphics[width = 8cm]{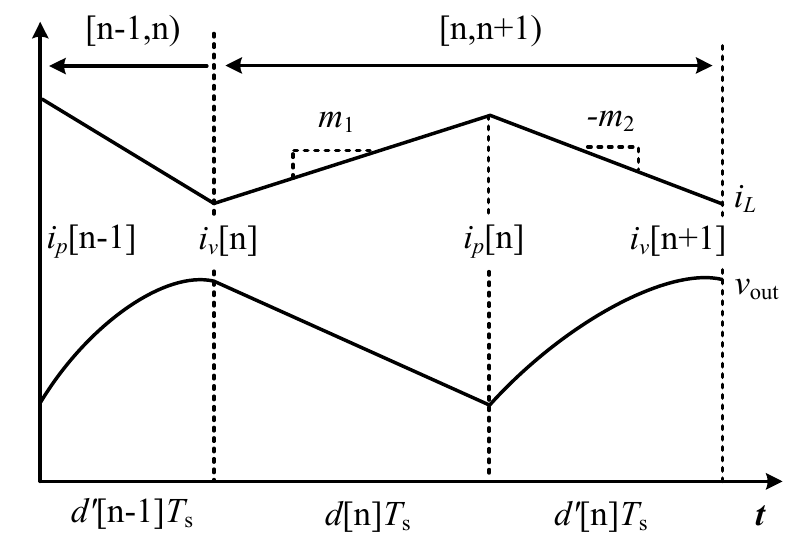}
    \label{fig:fpccwaveform}
    }
\end{subfigure}
\caption{Fixed\nobreakdash-frequency current\nobreakdash-mode boost converter requires non\nobreakdash-uniform sampling for the peak current even though the output is uniformly sampled.
}
\end{figure}

QS mapping admits a class of interference, whose frequency and amplitude are bounded, but does not repeat every cycle.  This can exist from exogenous and endogenous sources.
As discussed in Section\,\ref{sec:ctrlcond_intf}, the interference waveform does not have to repeat every switching cycle to model the interference in the control conditioning framework.
The non\nobreakdash-repeating interference has two pertinent effects: (i) an uncertainty in the current control loop and (ii) a feedthrough to the peak current output, as shown in Fig.\,\ref{fig:interference_feedforward}. In Section\,\ref{sec:dyn_map}, we show how this uncertainty (represented by the QS mapping) becomes a time\nobreakdash-varying nonlinearity in the dynamical mapping, which can be stabilized by proper design. This precludes undesirable subharmonic instabilities, for example in fixed\nobreakdash-frequency converters operating in peak current\nobreakdash-mode control \cite{Redl1981a}. Because the QS mapping is $\alpha$\nobreakdash-sector\nobreakdash-bounded and the dynamics of the control conditioning are designed to minimize the variations as a result of the interference, the inductor current perturbations from the interference feedthrough is bounded in amplitude and frequency. The peak current variations only appear as disturbances in the voltage control loop of a power converter. High frequency variations in the voltage output can in practice be reduced by additional filtering capacitance. Lower frequency variations are rejected by the voltage control loop.

The goal of the quasi\nobreakdash-static mapping and its usage in the dynamical mapping is to ensure the stability of the current control loop and to present a stable plant to the voltage control loop. In Section\,\ref{sec:dyn_map}, we show that the stability of the dynamical map as it pertains to the nonlinearity depends only on: (i) the sector bound of $\breve{T}$; (ii) each member of $\breve{T}$ being a proper static map; and (iii) the steepest slope among the members.  It is worth noting that if you repair the multivalued defect for one member of $\breve{T}$ using first\nobreakdash-event\nobreakdash-triggering with latching, then all members are repaired.  The control conditioning method that we will discuss in the Part II article smooths all members of the QS mapping from first\nobreakdash-event\nobreakdash-triggering with latching.

\subsection{Dynamical Mapping}
\label{sec:dyn_map}
The dynamical mapping also degrades due to interference.  The dynamical mapping is constructed to have the following properties:
\begin{enumerate}
\item Models the dominant dynamics that are degraded by interference. This degradation can result in instability and poor transient performance.
\item Input and output variables such that bounded input, bounded output (BIBO) stability in a sampled\nobreakdash-data space guarantees Lyapunov stability of the continuous time system.
\item Transient performance can be derived from the input and output variables of the mapping.
\item Separable as: (a) linear dynamics and 
(b) time-invariant or time\nobreakdash-varying non\nobreakdash-linearity.
\item Input and output variables of the mapping, linear dynamics, and non\nobreakdash-linearity correspond to physical quantities.
\item The switching period is much smaller than the $RC$ time constant.
\item The switching period is much smaller than the $L/R$ time constant.\!\footnote{A quantitative bound can be found in Appendix\,\ref{appendix:cmc_modeling}.}
\end{enumerate}

The stability of the continuous\nobreakdash-time (CT) dynamics of a power converter depend on the dynamical mapping.
For example, in examining the CT dynamics after a step change in the current command, the inductor current moves towards a new steady-state trajectory. This steady\nobreakdash-state trajectory is determined not only by the current command, but also the interference. The inductor current might remain stable and settle to a steady\nobreakdash-state trajectory, but can also become unstable. This unstable trajectory can manifest as an unbounded divergence, a limit cycle, or even chaos.  For a stable dynamical mapping, the effect of interference must decrease as the output trajectory approaches steady state, hence guaranteeing that the inductor current settles to a steady\nobreakdash-state CT trajectory.  The amount of time for the inductor current to reach steady state depends on how fast the effect of the interference decays in the dynamical mapping.

A current sensor output that is not a ramp, but is monotonic, can be conditioned to be a
{\em smooth}, but possibly {\em nonlinear static mapping}.  The interference can prolong the settling time or destabilize the dynamical mapping. Instabilities often manifest as subharmonics in the switching sequences of the current control loop; this cannot be represented by continuous\nobreakdash-time averaging models.
Even though subharmonics occur below the switching frequency, the features in the dynamics that cause it, arise within a switching period. Neither cycle\nobreakdash-by\nobreakdash-cycle nor sliding window averaging can be used to stabilize subharmonic phenomena. Subharmonic phenomena is a result of the dynamics of the peak current, which cannot be recovered from averaging.  In other words, we cannot recover the peak inductor current trajectory from the averaged trajectory, specifically because the subharmonic frequencies cannot be determined a\nobreakdash-priori.

We use a sampled-data model with a non\nobreakdash-uniform sampling of the peak current \cite{Cui2018a}. Because the instance of the peak inductor current is not in general periodic (e.g. during a transient), a sampled\nobreakdash-data model that represents this peak current at every cycle is necessarily non\nobreakdash-uniform in time. Hence, there is no continuous\nobreakdash-time equivalent to perform the stability and control performance analysis.

\begin{figure}
    \centering
    \includegraphics[width = 8cm]{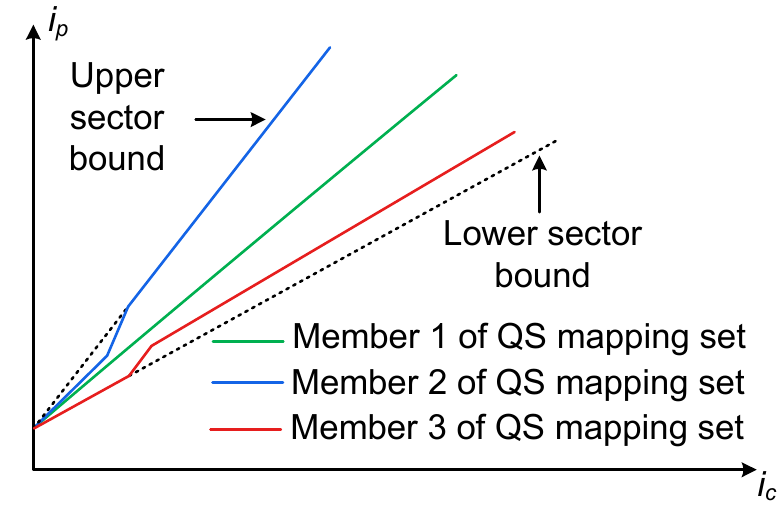}
    \caption{The quasi\nobreakdash-static (QS) current mapping is a family of functions from the current command $i_c$ to the actual peak inductor current $i_p$ and applies to a class of interference signal that does not repeat every switching period.}
    \label{fig:qsmapping}
\end{figure}
\begin{figure}
    \centering
    \includegraphics[width = 8cm]{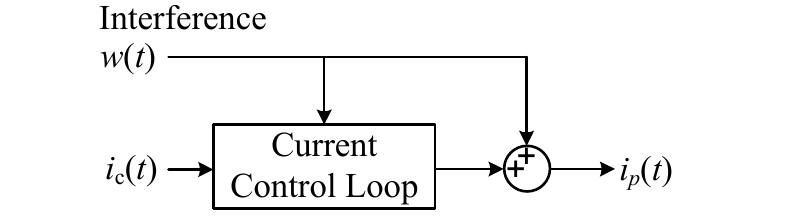}
    \caption{The non\nobreakdash-repeating interference $w(t)$ has two  effects: (i) an uncertainty in the current control loop and (ii) a feedthrough to the peak current output $i_p(t)$.}
    \label{fig:interference_feedforward}
\end{figure}


In this non\nobreakdash-uniform sampled\nobreakdash-data model, analysis must be performed directly from difference equations in the non\nobreakdash-uniform sampled\nobreakdash-data space. The current control loop consists of a current sensor $H_I$ and a current controller $f(\cdot)$, as shown in Fig.\,\ref{fig:dynmap}.
$f(\cdot)$ can be modeled in the sampled\nobreakdash-data space with the peak inductor current $i_p[n]$ and on time $t_{\text{on}}[n]$ as the state variables.
A useful way to represent the current controller is shown in Fig.\,\ref{fig:distrej}. Because a hardware comparator is used, the peak current is deadbeat to the current command.  For the circuit shown in Fig.\,\ref{fig:distrej}, with neither signal nor control conditioning, the transfer function \mbox{$G(z)=(1-z^{-1})$} from the perturbed peak current $\tilde{i}_p[n]$ to the perturbed on time $\tilde{t}_{\text{on}}[n]$ is deadbeat.


Interference is conventionally viewed as a sensor disturbance, often with implicit assumptions of a bounded amplitude, a lower bound on frequency, and possibly having memory.
The representation in Fig.\,\ref{fig:distrej} models the traditional sensor disturbance behavior.  This disturbance is homomorphic to an input to the current controller as shown in Fig.\,\ref{fig:distrej}.

Traditional remediation of interference includes filtering and blanking.
Filtering not only attenuates the interference $w[n]$, but also adds dynamics to $G(z)$, making the current controller slower.  When the dominant components of the disturbance are near the switching frequency, filtering falters because the filter dynamics have too similar of a time scale as the switching period making the filter dynamics dominate the current controller behavior. When the filter time constant extends over several switching periods, the notion of peak current mode control is no longer viable.  Blanking fails when the interference does not decay quickly enough within the switching period.

In contrast, the control conditioning approach treats interference as a model uncertainty, which in this paper is formulated as a Lure system whose stability is provable with the circle criterion \cite{Jury1964}, as shown Fig.\,\ref{fig:unctain}.  In particular, $I_c$ to $I_p$ is the open\nobreakdash-loop mapping of the dynamical mapping of the current controller in Fig.\,\ref{fig:distrej}, which corresponds to the static current mapping in Section\;\ref{sec:static_map}.
The mapping from $\tilde{i}_c[n]$ to $\tilde{t}_{\text{on}}[n]$ and $\tilde{i}_{c}[n]$ is the {\em dynamical mapping} of the deviation of the response from the open loop. $\psi(\cdot)$ embeds the previously discussed static mapping $\mathcal{T}$ from the current command $i_c$ to $i_p$,
\begin{subequations}
    \begin{align} 
        i_c  = I_c & + \tilde{i}_c,   \\
        i_p  = I_p & + \tilde{i}_p,   \\
        \mathcal{T}:i_c & \rightarrow i_p  \label{eqn:defT},\\
        \tilde{\mathcal{T}}: \tilde{i}_c & \rightarrow \tilde{i}_p \label{eqn:deftildeT},\\
        \tilde{\mathcal{T}}^{-1}(x) & = x + \psi(G_0 x) \label{eqn:T2psi},\\
        \frac{\text{d}\,(\tilde{\mathcal{T}}^{-1})}{\text{d}\,x}  & = 1 + G_0\psi^{'} \label{eqn:devTinvx},\:
        \text{where}\; \psi^{'} = \frac{\text{d}\,\psi(x)}{\text{d}\,x},\\
       \frac{\text{d}\,\mathcal{T}}{\text{d}\,x} &= \frac{\text{d}\,\tilde{\mathcal{T}}}{\text{d}\,x} = \frac{1}{1 + G_0\psi^{'}}\label{eqn:devTx},\\
         \psi^{'} &  = \frac{1}{G_0} \left( \frac{1}{\frac{\text{d}\,\mathcal{T}}{\text{d}\,x}} - 1  \right)  \label{eqn:devvarphi},
    \end{align}
\end{subequations}
where $x$ in this case (i.e., peak current-mode control) is the deviation of peak inductor current from the open loop and $G_0$ is the open-loop mapping from $\tilde{i}_p$ to $\tilde{t}_{\text{on}}$. The static current mapping $\mathcal{T}$ from the current command to the peak current is necessarily monotonic because it is proper, which makes it invertible in (\ref{eqn:T2psi}). $\psi$ is in general nonlinear; however, a necessary condition to prove stability of a Lure system requires $\psi$ to be locally Lipschitz and sector\nobreakdash-bounded. 
If $\mathcal{T}$ is monotonic (which can be repaired by control conditioning) and locally Lipschitz (which can be enforced by ensuring an upper bound on the bandwidth and amplitude of the interference either by  control conditioning, mild signal conditioning/filtering, or from the underlying physical mechanisms of the interference), then $\psi$ is locally Lipschitz.  The bound on interference amplitude also ensures that $\psi$ is sector-bounded.  Without control conditioning, which modifies $G(z)$ and repairs $\psi$, the feedback path through $\psi$ can cause instability.

For a quasi\nobreakdash-static mapping, the Lure formulation still applies because $\psi$ can also be time\nobreakdash-varying \cite{Brockett1966}. QS mapping $\breve{T}$ is the set of mappings $\breve{T}[n]$ from $i_c$ to $i_p$
\begin{align}
    T = \big\{ \breve{T}[n] \in  \breve{T} \mid n \in \mathbb{N}, \breve{T}[n]: i_c \in \mathbb{R} \mapsto i_p \in \mathbb{R} \big\}.
\end{align}
There exists a mapping $\mathcal{M}$ which maps $\breve{T}[n]$ to a time-varying nonlinearity $\psi[n]$ in the dynamical mapping
\begin{align}
   \mathcal{M}: \breve{T}[n] \mapsto \psi[n].
\end{align}

When the dynamical mapping is a Lure system \cite{khalil2002nonlinear}, powerful control-theoretic tools for nonlinear systems can guarantee absolute large\nobreakdash-signal stability, performance limits and bounds, and robustness.
This insight allows us to analyze how interference destabilizes or degrades the transient performance of the dynamical mapping.
This model for the dynamical mapping enables the precise design of control conditioning where we co-design $i_c$, $G$, and $\psi$ simultaneously. In contrast, signal conditioning only modifies $i_c$. The differences between the traditional and proposed views on treating interference are summarized in Fig.\,\ref{fig:distrejvsrobustcontrol}.
\begin{figure}
    \centering
    \includegraphics[width = 6cm]{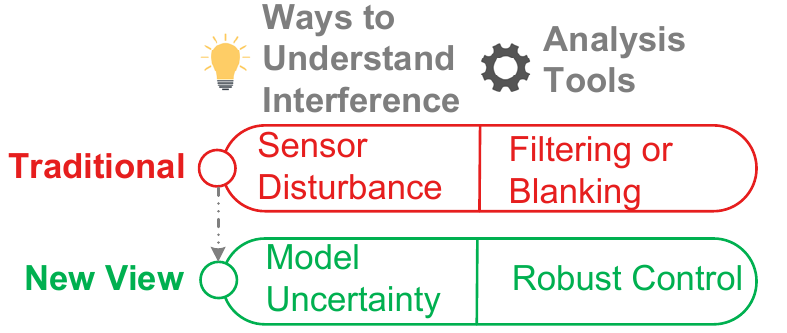}
    \caption{Compared to the traditional view of understanding interference as a model uncertainty in a Lure system so that the robust control tools can be applied.}
    \label{fig:distrejvsrobustcontrol}
\end{figure}

\begin{figure}[ht]
\centering
\subfigure[Traditionally, the interference is viewed as a sensor disturbance. 
The disturbance causes the deviation from the peak current command.
$w$ represents this deviation at every instance $n$.
The stable plant $G(z)$ represents the current controller using constant off\nobreakdash-time.
This model cannot explain the destabilization effect of interference to the current controller and cannot accurately predict the transient performance of the current controller in the presence of interference.
]{
    \centering
    \includegraphics[width = 8cm]{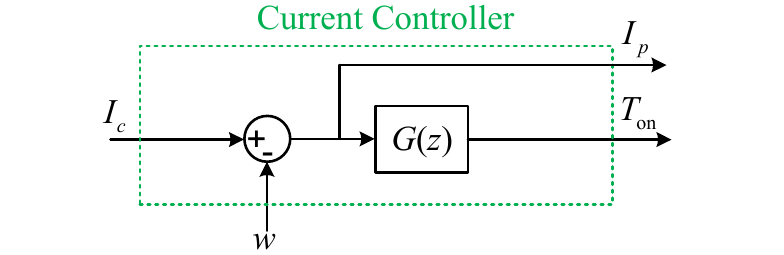}
    \label{fig:distrej}
}
\subfigure[We view the interference as a model uncertainty. In this Lure system, both stable plant $G(z)$ and interference $\psi$ determine the transient performance of the current control loop using constant off\nobreakdash-time. 
This model can explain the destabilization effect of interference to the current control loop.
This model can accurately predict the transient performance of the current control loop in the presence of interference.]{
    \centering
    \includegraphics[width = 8cm]{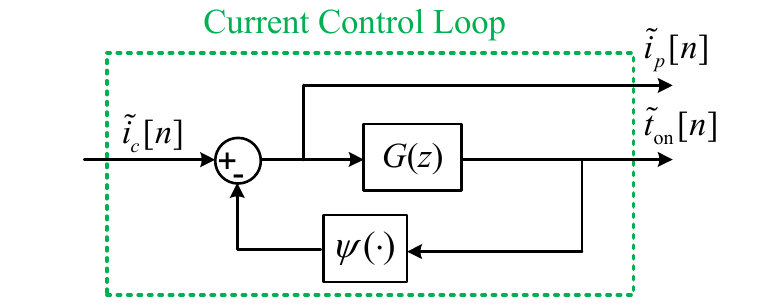}
    \label{fig:unctain}
}
\caption{Comparison of the traditional view and our view of treating interference. Modeling power converters as Lure systems allows us to guarantee large-signal stability.}
\end{figure}

\subsection{Current Mappings Predict Control Performance} \label{current_map_ctrl_perform}

We take advantage of three physical characteristics of interference to obtain a rigorous mathematical definition.
\begin{definition} \label{def:inteference} 
$w(t)$ is an {\em interference function} in continuous time, which can be considered as a deviation to the ideal signal satisfying the following properties: \begin{enumerate*} [label=(\roman*)] \item additive; \item bandwidth limited by $\omega_{ub}$ with no dc component; and \item Fourier transform  $W(\omega)$ is absolutely integrable.
\end{enumerate*}
\end{definition}
For (i), we discussed in Section \ref{sec:ctrlcond_intf} that interference is a deviation from the ideal signal; deviations are additive to the ideal signal.
For (ii), sensors in power electronics typically use a low\nobreakdash-pass filter; this automatically bandwidth-limits the signal from the current sensor.  The dc component of interference does not affect the dynamics of the current loop.  Because of this, the dc component can be moved outside of the current loop.  Outside of the current loop, the dc component appears as an offset to the current command. In practice, a voltage output converter using current-mode control corrects this current command offset using integral control in the voltage-feedback loop. Because the dc component is outside the current loop, only the ac component of interference needs to be considered in the current control loop analysis.
For (iii), absolute integrability of physical states is consistent with a broad class of interference from physical systems whose energy decays.
The absolute integrability condition guarantees the existence of the inverse Fourier transform of $W(\omega)$ as well as the bound on amplitude $A_{ub}$. Together, absolute integrability and limited bandwidth guarantees the existence of the upper bound on the Lipschitz constant ${\Lambda}_{ub}$.
\begin{align} \label{eqn:aub}
      |w(t)| = & \Bigg|\int_{-\infty}^{+\infty} W(\omega)e^{j\omega t}\,\text{d}\omega \Bigg| \nonumber \\
      \le &  \int_{-\infty}^{+\infty} \big| W(\omega) \big| \,\text{d}\omega \triangleq A_{ub}.
\end{align}
\begin{align}  \label{eqn:lub}
     |w'(t)| = & \Bigg|\int_{-\infty}^{+\infty} j\omega W(\omega) e^{j\omega t}\,\text{d}\omega \Bigg| \nonumber \\
     \le & \int_{-\infty}^{+\infty} \big | \omega W(\omega) \big| \,\text{d}\omega \triangleq {\Lambda}_{ub}.
\end{align}
Transient performance depends on the dynamical mapping.
We examine two types of power converters that  use current-mode control: variable switching frequency and fixed switching frequency.  
In these analyses, the inductor current is composed of waveforms that are rising and falling ramps. The slope of the rising ramp is $m_1$ and the slope of the falling ramp is $m_2$. 
A class of variable frequency converters uses either constant off\nobreakdash- or constant on\nobreakdash-time.

\subsubsection{Large\nobreakdash-Signal Stability of the \textbf{Constant Off\nobreakdash-Time} Current Control Loop}
We start with a constant off-time current control loop as shown in Fig.\,\ref{fig:eboostschmatic} and Fig.\,\ref{fig:eboostwaveform} to exemplify the control-conditioning analysis.

Peak inductor current is what is controlled for 
constant off\nobreakdash-time converters.
The current control loop with interference can be modeled as
\begin{subequations}
\label{eqn:cmcot}
    \begin{align} 
    \label{eqn:cmcotip} i_p[n] &= i_p[n-1] - m_2T_{\text{off}} +m_1t_{\text{on}}[n], \\
    \label{eqn:cmcotic} i_c [n] &= i_p[n] +w (t_{\text{on}}[n]).
    \end{align}
\end{subequations}

The current command from the outer loop in every cycle is denoted by $i_c[n]$. The peak inductor current in every cycle is denoted by $i_p[n]$.  $T_{\text{on}}$ is the equilibrium when $\psi = 0$, and $I_p$ is the equilibrium peak inductor current corresponding to $T_{\text{on}}$. $I_p$ and $T_{\text{on}}$ can be obtained by letting \mbox{$i_p[n+1] = i_p[n]$}, \mbox{$t_{\text{on}}[n+1] = t_{\text{on}}[n]$}. The relationship of the $I_p$ and $I_c$ is described by the static current mapping, which we denote by $\mathcal{T}$. 

Given an equilibrium at $i_p[n] = I_p$, $t_{\text{on}}[n]= T_{\text{on}}$, $i_c[n] = I_c$,  we can translate the equilibrium to the origin by defining \mbox{$\tilde i_p[n] = i_{p}[n] - I_{p}$}, \mbox{$ \tilde t_{\text{on}}[n] = t_{\text{on}}[n] -  T_{\text{on}}$}, \mbox{$ \tilde i_c[n] = i_c[n] - I_c$}. The system represented by (\ref{eqn:cmcot}) can now be expressed as
\begin{subequations}
\label{eqn:cmcotshift}
    \begin{align}
     \label{eqn:cmcotipshift} \tilde i_p[n] & = \tilde i_p[n-1] +  m_1 \tilde t_{\text{on}}[n], \\
     \label{eqn:cmcoticshift} \tilde i_c [n] & = \tilde i_p[n] +w(t_{\text{on}}[n])- w(T_{\text{on}}).
    \end{align}
\end{subequations}


System (\ref{eqn:cmcotipshift}) and  (\ref{eqn:cmcoticshift}) can be transformed into a Lure system as shown in Fig.\;\ref{fig:luresystem} when $w(t_{\text{on}}[n]) - w(T_{\text{on}})$ is sector\nobreakdash-bounded.
This interference\nobreakdash-related term $w(t_{\text{on}}[n]) - w(T_{\text{on}})$ is treated as a model uncertainty and expressed as a nonlinear function $\psi(\cdot)$.  
$G(z)$ is a linear transfer function in the sampled\nobreakdash-data space and $\psi(\cdot)$ is a sector-bounded nonlinear function.
Note that although the peak inductor current sequence does not have a uniform correspondence to the physical time domain, the $z$\nobreakdash-transform can still be applied \cite{Cui2018a}.
By applying the circle criterion \cite{Brockett1966}, we can prove the stability condition for the current control loop described by (\ref{eqn:cmcotipshift}) and (\ref{eqn:cmcoticshift}) for large perturbations.

Theorem 1 shows that the stability of the current control loop using constant off\nobreakdash-time is only dependent on the Lipschitz constant of $\psi(\cdot)$ and the rising slope $m_1$ of the inductor current.
We highlight several properties of this stability theorem: (1) It is a global stability criterion. The result holds for arbitrarily large interference; 
(2) This theorem is especially useful when the interference does not identically repeat every cycle.
\begin{theorem} \label{theorem:gloasystab}
The current control loop represented by the Lure system in Fig.\,\ref{fig:luresystem} is \emph{globally asymptotically stable} if \mbox{${\Lambda}_{ub} < m_1/2$}.
\end{theorem}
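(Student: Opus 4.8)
The plan is to cast the perturbation dynamics \eqref{eqn:cmcotipshift}--\eqref{eqn:cmcoticshift} as a discrete-time Lure feedback interconnection and invoke the circle criterion. Fixing the exogenous command at its equilibrium, $\tilde i_c[n]=0$, equation \eqref{eqn:cmcoticshift} gives $\tilde i_p[n]=-\psi(\tilde t_{\text{on}}[n])$, where I take the memoryless nonlinearity $\psi(\tilde t_{\text{on}}):=w(T_{\text{on}}+\tilde t_{\text{on}})-w(T_{\text{on}})$, so that $\psi(0)=0$. Solving \eqref{eqn:cmcotipshift} for the on-time perturbation in the $z$-domain gives $\tilde t_{\text{on}}(z)=\tfrac{1}{m_1}(1-z^{-1})\,\tilde i_p(z)$, whence the signal entering the nonlinearity obeys $\tilde t_{\text{on}}=-G(z)\,\psi(\tilde t_{\text{on}})$ with linear block $G(z)=\tfrac{1}{m_1}(1-z^{-1})$. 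This is precisely the Lure loop of Fig.\,\ref{fig:luresystem}. Because the peak-current sequence is sampled non-uniformly in physical time, I would first invoke the sampled-data framework of \cite{Cui2018a} to certify that the $z$-transform and the frequency-domain test that follows remain meaningful in this space.

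Next I would pin down the sector. The Lipschitz bound \eqref{eqn:lub} yields $|\psi(y)|=|w(T_{\text{on}}+y)-w(T_{\text{on}})|\le{\Lambda}_{ub}\,|y|$ for every $y$, so $\psi$ lies globally in the \emph{symmetric} sector $[-{\Lambda}_{ub},\,{\Lambda}_{ub}]$. The bound is uniform in $n$, which is exactly what lets the argument also cover the quasi-static case where $\psi=\psi[n]$ is time-varying: the circle criterion requires only a common sector, not a fixed nonlinearity.

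The core step is the discrete-time circle criterion for a symmetric sector. For $k_1=-{\Lambda}_{ub}<0<k_2={\Lambda}_{ub}$ the critical disk collapses to the disk centered at the origin of radius $1/{\Lambda}_{ub}$, and absolute stability follows provided $G$ is Schur and its Nyquist locus lies strictly inside that disk, i.e. the small-gain condition $\|G\|_\infty\,{\Lambda}_{ub}<1$. Here $G(z)=\tfrac{1}{m_1}\tfrac{z-1}{z}$ has its only pole at $z=0$ and is therefore Schur. On the unit circle, $|G(e^{j\omega})|=\tfrac{1}{m_1}|1-e^{-j\omega}|=\tfrac{2}{m_1}|\sin(\omega/2)|$, which is maximized at the Nyquist frequency $\omega=\pi$, giving $\|G\|_\infty=2/m_1$. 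The condition $(2/m_1)\,{\Lambda}_{ub}<1$ is thus equivalent to ${\Lambda}_{ub}<m_1/2$, the claimed bound, and absolute stability of the Lure system is precisely global asymptotic stability of the equilibrium.

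I expect the hard part to lie at the two ends of this chain rather than in the algebra, which is short. First, the frequency-domain test must be legitimized over the non-uniform sampled-data space; I would lean on \cite{Cui2018a} and check that the equilibrium on-time $T_{\text{on}}$ is well defined and that the cycle map $\tilde i_p[n-1]\mapsto\tilde i_p[n]$ is single-valued (the implicit relation $a=-\psi((a-b)/m_1)$ is a contraction in $a$ with constant ${\Lambda}_{ub}/m_1<1$, so the comparator event is deadbeat and unique). Second, I would promote the input-output conclusion of the circle criterion to Lyapunov stability of the state, where the KYP lemma supplies a quadratic storage function valid even for the time-varying $\psi[n]$. As an independent check on the constant, a direct argument with $V[n]=\tilde i_p[n]^2$ reproduces it: from $|a|\le{\Lambda}_{ub}(|a|+|b|)/m_1$ one obtains $|\tilde i_p[n]|\le\rho\,|\tilde i_p[n-1]|$ with $\rho={\Lambda}_{ub}/(m_1-{\Lambda}_{ub})$, and $\rho<1$ holds exactly when ${\Lambda}_{ub}<m_1/2$.
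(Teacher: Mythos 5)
Your proposal is correct and takes essentially the same route as the paper: the identical Lure reformulation with $G(z)=(1-z^{-1})/m_1$ and $\psi(x)=w(x+T_{\text{on}})-w(T_{\text{on}})$, the sector $[-\Lambda_{ub},\Lambda_{ub}]$ obtained from the Lipschitz bound, and the discrete-time (Tsypkin) circle criterion, where your symmetric-sector small-gain condition $\|G\|_{\infty}\,\Lambda_{ub}<1$ with $\|G\|_{\infty}=2/m_1$ is exactly the paper's disk-containment observation specialized to this sector and yields the same bound $\Lambda_{ub}<m_1/2$. Your supplementary checks (the pole of $G$ at $z=0$, validity of the $z$-transform in the non-uniform sampled-data space via \cite{Cui2018a}, the uniform-in-$n$ sector covering the time-varying case, and the direct contraction cross-check) are sound and consistent with, though more explicit than, the paper's argument.
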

\begin{figure*}[ht]
\centering
\subfigure[The Lure system representation
\newline for large-signal analysis.
\newline The interference is embedded in $\psi(x)$.]{
\includegraphics[width = 0.4\textwidth ]{unctain.pdf} 
   \label{fig:luresystem}
}
\subfigure[Large\nobreakdash-signal Nyquist plot for $G(z)$.\newline The  circle criterion is used for determining stability.]{
    \includegraphics[width = 0.3 \textwidth]{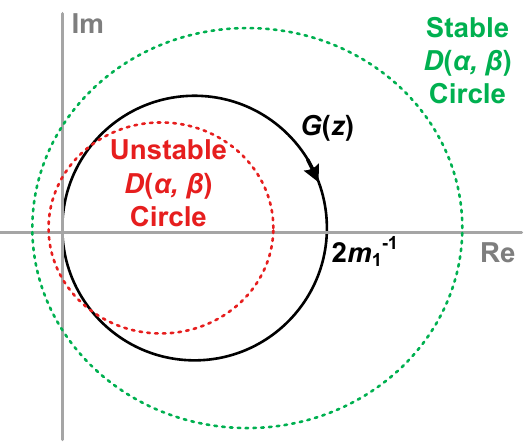} 
    \label{fig:circlecriterionplot}
    }
\subfigure[Small-signal root locus plot for $G(z)$.
\newline
{\color{green}---} is for the negative feedback;
\newline
{\color{red}---} is for the positive feedback.]{
    \includegraphics[width = 0.2 \textwidth]{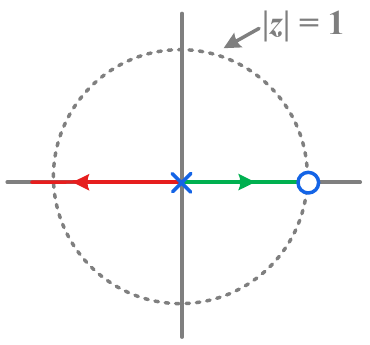} 
    \label{fig:rlpurecotcm}
}
\caption{Modeling of the constant off\nobreakdash-time current control loop with interference. Lure systems can be used for analyzing both large\nobreakdash-signal and small\nobreakdash-signal stability in variable\nobreakdash-frequency current\nobreakdash-mode power converters.}
\end{figure*}
\begin{proof}
The sufficient condition can be proven by the Tsypkin criterion (i.e. discrete-time circle criterion) \cite{Larsen2001}.
System\,(\ref{eqn:cmcotshift}) can be reformulated as a Lure system as shown in Fig. \ref{fig:luresystem} with 
\begin{align}
    \label{eqn:gzofcotcm} G(z) &= \frac{1-z^{-1}}{m_1},\\
    \psi(x) &= w(x + T_{\text{on}}) - w(T_{\text{on}}).
\end{align}

From the circle criterion \cite{Okuyama2014}, assume $\psi$ is in sector $[\alpha, \beta]$, the Lure system is globally asymptotically stable if the Nyquist plot of $G(z)$, as shown in Fig.\;\ref{fig:circlecriterionplot}, lies in the interior of disk $D(\alpha,\beta)$. 
$D(\alpha, \beta)$ is defined as the closed disk in the complex plane whose diameter is the line segment connecting points $-1/\alpha$ and $-1/\beta$.

We observe that for all disk $D(\alpha,\beta)$ with $\alpha \ge -m_1/2$ and $\beta < \infty$, $G(z)$ lies inside the $D(\alpha,\beta)$ from Fig.\;\ref{fig:circlecriterionplot}.
This directly results in the sufficient stability criterion that ${\Lambda}_{ub} < m_1/2$. ${\Lambda}_{ub}$ is the least upper bound of the Lipschitz constant.

Remark that this stability criterion does not require the $\psi(x)$ to be same in every switching cycle.
This means the stability condition still holds even if the interference is not repetitive in every switching cycle.

\end{proof}



An intuition for the stability of (\ref{eqn:cmcotipshift}) and (\ref{eqn:cmcoticshift}) can be discerned by linearization
\begin{subequations}
\label{eqn:cmcotshiftlin}
    \begin{align}
        \label{eqn:cmcotipshiftlin} \tilde i_p[n] &= \tilde i_p[n-1] +  m_1 \tilde t_{\text{on}}[n], \\
        \label{eqn:cmcoticshiftlin} \tilde i_c [n]  & = \tilde i_p[n] + \frac{\text{d}w}{\text{d}t}\Bigg \rvert_{ t = T_{\text{on}}} \tilde t_{\text{on}} [n].
    \end{align}
\end{subequations}

Without interference, 
the feedback path in Fig.\;\ref{fig:luresystem} disappears and the current control loop can be considered an open\nobreakdash-loop system. The current control loop is stable because the pole of $G(z)$ is located at $z=0$, which is inside the unit circle.
This is consistent with the fact that the constant off\nobreakdash-time control guarantees a stable inductor current for all duty cycles.
With interference, the current control loop needs to be treated as a closed-loop system. 
The gain in the feedback path is the worst\nobreakdash-case linearized slope of $\psi$. 
If this feedback gain magnitude is positive, the current control loop is in negative feedback and
the root locus starts from the open\nobreakdash-loop pole at $z=0$ and ends at the open\nobreakdash-loop zero at $z=1$, as shown in the solid green line in Fig.\;\ref{fig:rlpurecotcm}.
Note that if the current control loop is in negative feedback, no matter how large the feedback gain magnitude, the current control loop is always stable. 

If the feedback gain is negative, the current control loop is in positive feedback, the root locus starts from the open-loop pole at $z=0$ and ends at $z= -\infty$ as shown in the dashed blue line in Fig.\;\ref{fig:rlpurecotcm}.
When the magnitude of the negative gain is too high, the closed\nobreakdash-loop pole moves outside of the unit circle, which makes the current control loop unstable.
The boundary between stability and instability for the current control loop occurs when the feedback gain is $-m_1/2$.

\subsubsection{Transient Performance of the \textbf{Constant Off\nobreakdash-Time} Current Control Loop}
Stability is the minimum needed for any power converter.
Beyond this, settling time and overshoot in continuous physical time are two important transient criteria.

Typically, these performance criteria are evaluated in the linearized system.
We perform the analysis on these performance criteria in the sampled\nobreakdash-data space and examine the correspondence to continuous physical time.
Theorem 2 in \cite{xiaofanacc2019} rigorously proves in minimizing the settling cycles in the sampled\nobreakdash-data space, the physical settling time is minimized as well.  
Also, the overshoot in continuous physical time is bounded from above by the overshoot in the sampled-data space, from Theorem 3 in \cite{xiaofanacc2019}.

Because the current control loop is a first\nobreakdash-order system in the sampled\nobreakdash-data model, we can analytically derive the difference equation that describes the closed\nobreakdash-loop system. 
The linearized difference equation of the system (\ref{eqn:cmcotipshiftlin}) and (\ref{eqn:cmcoticshiftlin}) is
\begin{align} \label{eqn:cmcotlin_tf_td}
\tilde i_p[n] &=  \beta \tilde i_c [n] + a \tilde i_p[n-1],
\end{align}
where
\begin{align}
    \beta = \frac{1}{1 + \sigma}, \quad a = \frac{\sigma}{1+\sigma}, \quad \sigma = \frac{1}{m_1}\frac{\text{d}w}{\text{d}t} \Bigg \rvert_{t = T_{\text{on}}}. \nonumber
\end{align}
The parameter $\sigma$ can be interpreted as the normalized small\nobreakdash-signal gain from the interference. 
We apply the $z$-transform to (\ref{eqn:cmcotlin_tf_td})
\begin{align}  \label{eqn:cmcotlin_tf_zd}
    C(z) = \frac{\beta}{1- a z^{-1}}.
\end{align}

The transfer function in the $z$\nobreakdash-domain allows the analysis of the settling cycles and overshoot. These two performance criteria depend on the pole locations of the transfer function.
If no interference appears ($\sigma = 0$), $\beta$ equals 1 and the pole $a$ is at the origin. 
This indicates that the output time trajectory exactly follows the input time trajectory in every cycle.

With interference, the pole $a$ cannot be located exactly, but rather within a range
\begin{align} \label{label:arange1}
a_{\text{min}} \triangleq 1 - \frac{m_1}{(m_1 - {\Lambda}_{ub})} \le a \le a_{\text{max}} \triangleq 1 - \frac{m_1}{(m_1 + {\Lambda}_{ub})}. \vspace{+2pt}
\end{align}
It is worth noting that the location of $a$ is uncertain because the phase of each frequency component of the interference is uncertain.
We derive the worst\nobreakdash-case settling $N_w$ from the location of the pole as\footnote {Note that the settling might not be an integer. To obtain the settling cycles in the sampled-data space, we can round this number to the next higher positive integer. 
In this paper, we directly use the settling $N_w$ as the metric for the \emph{settling cycles}.}
\begin{align} \label{eqn:settlecycle1}
    N_w \triangleq \text{max} \bigg\{\bigg|\frac{4}{\text{ln}(|a_{\text{min}}|)}\bigg|,\bigg|\frac{4}{\text{ln}(|a_{\text{max}}|)}\bigg|\bigg\}.
\end{align}

The step response of system (\ref{eqn:cmcotlin_tf_zd}) has an overshoot if \mbox{$a_{\text{min}}<0$}. The \emph{worst-case overshoot} $O_w$ is
\begin{align} \label{eqn:overshoot1}
    O_w \triangleq \text{max}\{-a_{\text{min}},0\}, 
\end{align}
where $O_w$ is expressed in percentage form in this paper.
The stability and transient performance results of the constant off-time current control loop are summarized in Appendix\,\ref{ref:sec_design_eqn}, Table\,\ref{table:cont_off_cm_theory}.
\subsubsection{Large\nobreakdash-Signal Stability and Transient Performance of the \textbf{Constant On\nobreakdash-Time} Current Control Loop}
The analysis in the previous section can be applied to constant on\nobreakdash-time converters where valley current is controlled.

We can substitute $m_2$ for $m_1$ in Theorem\,\ref{theorem:gloasystab} to derive the stability condition and in (\ref{label:arange1}), (\ref{eqn:settlecycle1}), and (\ref{eqn:overshoot1}) for the settling and overshoot.
For example, the stability criterion of the current control loop using constant off\nobreakdash-time is given by Corollary 1.
\begin{corollary}
    The constant on-time current control loop is \emph{globally asymptotically stable} if ${\Lambda}_{ub} < m_2/2 $.
\end{corollary}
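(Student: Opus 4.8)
The plan is to obtain Corollary~1 directly from Theorem~\ref{theorem:gloasystab} by exploiting the structural symmetry between valley-current control under constant on-time and peak-current control under constant off-time, with the falling slope $m_2$ taking the role played by the rising slope $m_1$. First I would write the cycle-by-cycle difference equation for the valley current. With the on-time $T_{\text{on}}$ held fixed and the off-time $t_{\text{off}}[n]$ varying to regulate the valley $i_v[n]$, the current obeys
\begin{align}
i_v[n] &= i_v[n-1] + m_1 T_{\text{on}} - m_2\, t_{\text{off}}[n], \nonumber
\end{align}
paired with the command relation $i_c[n] = i_v[n] + w(t_{\text{off}}[n])$, which mirrors (\ref{eqn:cmcot}) after interchanging the roles of the on/off intervals and the two ramp slopes.

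Next I would translate the equilibrium $(I_v, T_{\text{off}}, I_c)$ to the origin exactly as in (\ref{eqn:cmcotshift}), giving $\tilde i_v[n] = \tilde i_v[n-1] - m_2\,\tilde t_{\text{off}}[n]$ together with $\tilde i_c[n] = \tilde i_v[n] + \psi(\tilde t_{\text{off}}[n])$, where $\psi(x) = w(x + T_{\text{off}}) - w(T_{\text{off}})$ is the same interference nonlinearity as before. Solving the state recursion for the timing deviation casts the loop as the Lure system of Fig.\,\ref{fig:luresystem} with
\begin{align}
G(z) &= \frac{1 - z^{-1}}{m_2}, \nonumber
\end{align}
i.e. (\ref{eqn:gzofcotcm}) with $m_2$ substituted for $m_1$.

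The stability argument is then verbatim that of Theorem~\ref{theorem:gloasystab}. The least upper bound $\Lambda_{ub}$ on the Lipschitz constant of $\psi$ from (\ref{eqn:lub}) is a property of the interference $w$ alone and is therefore untouched by the substitution. Applying the Tsypkin (discrete-time circle) criterion, the Nyquist locus of $G(z)=(1-z^{-1})/m_2$ lies inside every disk $D(\alpha,\beta)$ with $\alpha \ge -m_2/2$ and $\beta < \infty$, as in Fig.\,\ref{fig:circlecriterionplot}, which delivers the sufficient condition $\Lambda_{ub} < m_2/2$ and completes the proof.

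The one point that genuinely needs care, rather than a true obstacle, is the sign bookkeeping from sampling along the falling ramp. Proposition~\ref{th:equexcond_ff} guarantees that the valley-current sensor output $-m_2 t + w(t)$ can be conditioned into a proper (single-valued, continuous, monotonic) static mapping, which is what licenses the Lure formulation. The negative slope $-m_2$ in the state equation reverses the feedback polarity relative to the constant off-time case, but this sign is absorbed by replacing $\psi$ with its reflection $x \mapsto \psi(-x)$, an operation that preserves both $\psi(0)=0$ and the Lipschitz bound $\Lambda_{ub}$ and hence leaves the sector unchanged. The circle-criterion step therefore goes through identically, which is also why the root-locus picture of Fig.\,\ref{fig:rlpurecotcm} already contains both positive- and negative-feedback branches, with the symmetric stability boundary sitting at a feedback gain of $-m_2/2$.
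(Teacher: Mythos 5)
Your proof is correct and takes essentially the same route as the paper: the paper obtains this corollary simply by substituting $m_2$ for $m_1$ in Theorem~\ref{theorem:gloasystab}, which is precisely the argument your derivation carries out in detail (valley-current difference equations, the Lure form with $G(z)=(1-z^{-1})/m_2$, and the Tsypkin/circle-criterion step). Your extra care with the sign of the falling-ramp state equation --- absorbing it into the reflected nonlinearity $x \mapsto \psi(-x)$, which preserves $\psi(0)=0$, the Lipschitz bound $\Lambda_{ub}$, and hence the symmetric sector $[-\Lambda_{ub},\Lambda_{ub}]$ --- is a detail the paper leaves implicit, and you handle it correctly.
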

The stability and transient performance results of the constant on\nobreakdash-time current control loop are summarized in Appendix\,\ref{ref:sec_design_eqn}, Table\,\ref{table:cofft_on_cm_theory}.
This similar analysis can be
applied to any buck, boost, buck\nobreakdash-boost, and other derived converters that use constant on\nobreakdash- or off\nobreakdash-time control.

\subsubsection{Large-Signal Stability of the \textbf{Fixed-Frequency Peak} Current Control Loop}
The analysis framework can also be applied to the \mbox{dc\nobreakdash-dc} converters using fixed\nobreakdash-frequency control. We take the fixed\nobreakdash-frequency peak current\nobreakdash-mode control in Fig.\;\ref{fig:fixedfreqpccboost}, \ref{fig:fpccwaveform} as an example.

The current command from the outer voltage loop is represented by $i_c[n]$. The duty cycle is represented by $d\,[n]$ and \mbox{$d^{'}[n] = 1-d[n]$}. $T$ represents the switching period. 
In general, if the equilibrium is at \mbox{$d\,[n]= D$}, we can translate the equilibrium to the origin.
The generalized state\nobreakdash-space model of the current control loop is
\begin{align} 
\label{eqn:ffpccip_ss} \tilde i_v[n] = \,&\tilde i_v[n-1] + (m_1 + m_2)\,\tilde d\,[n-1]\,T, \\
\label{eqn:ffpccic_cf} 0 =\,& \tilde i_v[n] +  w(\tilde d\,[n]\,T + D\,T) - w(D\,T) \nonumber \\
& + m_1 \tilde{d}[n] T- \tilde i_c [n],\\
\label{eqn:ffpccic_output} \tilde{i}_p[n] = \,& \tilde{i}_v[n] + m_1\tilde{d}[n]T ,
\end{align}
where $\tilde{i}_v[n]$ is the state variable, $\tilde{d}[n]$ is the auxiliary state variable, $\tilde{i}_c[n]$ is the input, and $\tilde{i}_p[n]$ is the output. (\ref{eqn:ffpccip_ss}) is the conventional state\nobreakdash-space equation, (\ref{eqn:ffpccic_cf}) is the constraint function, and (\ref{eqn:ffpccic_output}) is the output equation.

We reformulate the current control loop by substituting (\ref{eqn:ffpccic_output}) into (\ref{eqn:ffpccip_ss}) and (\ref{eqn:ffpccic_cf}) as
\begin{subequations}
\label{eqn:ffpcc} 
    \begin{align} 
    \label{eqn:ffpccip} \tilde i_p[n] &= \tilde i_p[n-1] -m_2 \, \tilde d \,[n-1] \, T + m_1\,\tilde d\,[n]\,T, \\
    \label{eqn:ffpccic} 0 &= \tilde i_p[n] +  w(\tilde d\,[n]\,T + D\,T) - w(D\,T) - \tilde i_c [n].
    \end{align}
\end{subequations}

We can view system (\ref{eqn:ffpcc}) as a Lure system as shown in Fig.\;\ref{fig:luresystefpcc}
with
\begin{align} \label{eqn:fpccplanttf}
    G(z) &= \frac{1-z^{-1}}{m_1 + m_2 z^{-1}},\\
    \psi(x) &= w(x + DT) - w(DT).
\end{align}

Unlike with variable frequency current loops where the open-loop pole from $G(z)$ is always at the origin, in constant frequency current  loops, the pole location depends on both $m_1$ and $m_2$ and would be unstable if located outside the unit circle.

By applying a proof similar to that in Theorem \ref{theorem:gloasystab}, we derive the stability condition for the system (\ref{eqn:ffpccip}) and (\ref{eqn:ffpccic}) under any large-signal disturbance.
Theorem \ref{theorem:gloasystabfpcc} shows that the stability of the current control loop is dependent on the upper bound of the Lipschitz constant of the interference, the rising slope of the inductor current, and the falling slope of the inductor current.
We highlight several properties of this stability theorem: (1) it is a global stability criterion. The result holds for arbitrarily large interference; 
and (2) this theorem is especially useful when the interference does not identically repeat every cycle.
\begin{theorem} \label{theorem:gloasystabfpcc}
The current control loop represented by the Lure system in Fig.\,\ref{fig:luresystefpcc} is \emph{globally asymptotically stable} if \mbox{$\Lambda_{ub} < (m_1 - m_2)/2$}.
\end{theorem}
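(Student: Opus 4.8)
The plan is to mirror the proof of Theorem~\ref{theorem:gloasystab}, again invoking the discrete-time circle (Tsypkin) criterion, but now tracking the extra pole that the falling slope $m_2$ introduces into the plant. First I would take the Lure interconnection of Fig.~\ref{fig:luresystefpcc} already set up from (\ref{eqn:ffpccip})--(\ref{eqn:ffpccic}), with linear part $G(z) = (1-z^{-1})/(m_1 + m_2 z^{-1})$ and memoryless nonlinearity $\psi(x) = w(x+DT) - w(DT)$ as in (\ref{eqn:fpccplanttf}). Because $w$ has Lipschitz constant bounded by $\Lambda_{ub}$ from (\ref{eqn:lub}), $\psi$ satisfies $|\psi(x)| \le \Lambda_{ub}|x|$ with $\psi(0)=0$, so $\psi$ lies in the symmetric sector $[\alpha,\beta] = [-\Lambda_{ub}, \Lambda_{ub}]$; exactly as in Theorem~\ref{theorem:gloasystab}, this sector is attained cycle-by-cycle and does not require the interference to repeat.

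Next I would check the circle-criterion hypotheses on $G(z)$. Its single pole sits at $z=-m_2/m_1$, which lies strictly inside the unit circle precisely when $m_1 > m_2$ --- the sub-50\% duty-cycle regime --- so $G$ is Schur stable; note that the hypothesis $\Lambda_{ub} < (m_1-m_2)/2$ already forces $m_1 > m_2$, so open-loop stability is automatic. Since the sector straddles the origin ($\alpha < 0 < \beta$), the applicable form of the circle criterion requires the Nyquist locus of $G(z)$ to lie in the interior of the disk $D(\alpha,\beta)$, which for the symmetric sector is the disk centered at the origin with radius $1/\Lambda_{ub}$. Equivalently, global asymptotic stability follows once $\max_\omega |G(e^{j\omega})| < 1/\Lambda_{ub}$.

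The computational core is therefore the peak magnitude of $G$ on the unit circle. Writing $z=e^{j\omega}$ and $c=\cos\omega$, a short calculation gives
\begin{align}
|G(e^{j\omega})|^2 = \frac{2(1-c)}{m_1^2 + m_2^2 + 2 m_1 m_2 c}. \nonumber
\end{align}
I would show this is monotonically decreasing in $c$ --- its derivative has numerator $-2(m_1+m_2)^2 < 0$ --- so the maximum occurs at $c=-1$, i.e.\ $\omega=\pi$ ($z=-1$), where $|G(-1)| = 2/(m_1-m_2)$. (Since $G$ is a bilinear map of $e^{-j\omega}$, the locus is in fact a circle and $z=-1$ is simply its farthest point from the origin.) Substituting into the enclosure condition $2/(m_1-m_2) < 1/\Lambda_{ub}$ yields exactly $\Lambda_{ub} < (m_1-m_2)/2$, and because the sector bound holds independently each cycle, the conclusion is robust to non-repeating interference.

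I expect the main obstacle to be the peak-magnitude computation together with the correct bookkeeping of the $m_2$-induced pole: unlike the constant-off-time case where the open-loop pole is fixed at the origin, here one must simultaneously confirm Schur stability and locate the worst-case frequency, and it is essential that both reduce to the single condition $m_1 > m_2$. The geometric subtlety --- that the straddling sector demands disk \emph{containment} of the Nyquist circle rather than disk avoidance --- is where an error would most easily creep in, so I would take care to match the sector sign convention to the containment form of the criterion.
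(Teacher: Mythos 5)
Your proof is correct and follows essentially the same route as the paper: the identical Lure reformulation with $G(z)=(1-z^{-1})/(m_1+m_2z^{-1})$ and $\psi(x)=w(x+DT)-w(DT)$, followed by the discrete-time circle (Tsypkin) criterion with a sector bound inherited from the Lipschitz bound $\Lambda_{ub}$. The only difference is one of explicitness: you verify the disk-containment condition analytically --- computing $\max_\omega|G(e^{j\omega})|=2/(m_1-m_2)$ at $z=-1$ and confirming Schur stability of the pole at $z=-m_2/m_1$ from the hypothesis itself --- whereas the paper reads the containment off the Nyquist plot and otherwise defers to ``a proof similar to that in Theorem 1.''
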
 
We observe that if $m_2 > m_1$, the system loses stability even with zero interference (\mbox{${\Lambda}_{ub} = 0$}). This observation matches the well-recognized result that fixed-frequency peak current-mode control is unstable for $D>0.5$.
An intuition for the stability of the current control loop can be discerned by linearization.
The linearization of system (\ref{eqn:ffpccip}) and (\ref{eqn:ffpccic}) results in
\begin{align} 
\label{eqn:ffpccipshift} \tilde i_v[n] = \,&\tilde i_v[n-1] + (m_1 + m_2)\,\tilde d\,[n-1]\,T, \\
\label{eqn:ffpccicshift}  0 =\,& \tilde i_v[n] +  w^{'}(D T) \,\tilde d[n] T \nonumber \\
& + m_1 \tilde{d}[n] T- \tilde i_c [n],\\
\label{eqn:ffpcc_output} \tilde{i}_p[n] = \,& \tilde{i}_v[n] + m_1\tilde{d}[n]T.
\end{align}


\begin{figure*}[t]
\centering
\subfigure[The Lure system representation \newline for large-signal analysis. \newline The interference is embedded in $\psi(x)$.]{
\includegraphics[width=0.4\textwidth ]{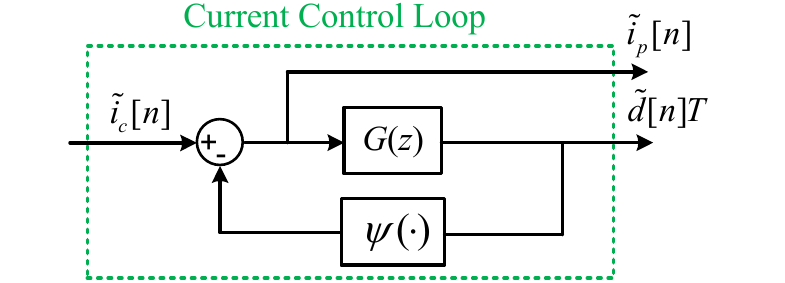} 
     \label{fig:luresystefpcc}
}
\subfigure[Large-signal Nyquist plot for $G(z)$. 
\newline The circle criterion is used for determining \newline stability.]{
    \includegraphics[width=0.3\textwidth ]{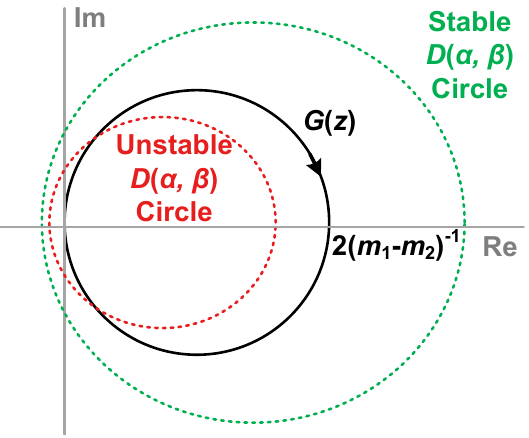} 
    \label{fig:circlecriterionplotfpcc}
    }
\subfigure[Small-signal root locus plot for $G(z)$. \newline {\color{green}---} is for the negative feedback; \newline
{\color{red}---} is for the positive feedback.]{
    \includegraphics[width=0.2\textwidth]{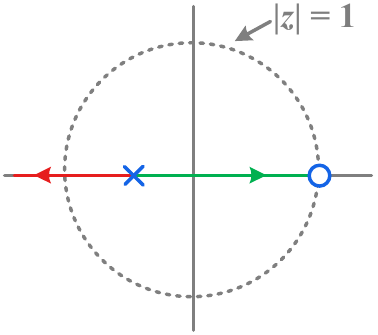} 
    \label{fig:rl_fpcc}
}
\caption{Modeling of the fixed-frequency current control loop with interference. Lure systems can be analyzed for both large-signal and small-signal stability in fixed-frequency current-mode power converters.}
\end{figure*}
The gain in the feedback path is $\psi^\prime(x)$, whose effect can be observed in the small\nobreakdash-signal root locus plot in Fig.\;\ref{fig:rl_fpcc}.
If \mbox{$m_2 > m_1$}, the open-loop pole \mbox{$z = m_2/m_1$} is outside of the unit circle and the current control loop is unstable even when the interference function is zero (i.e., \mbox{$\psi = 0$}).
If the current control loop is in negative feedback (\mbox{$\psi^\prime(x)>0$}), as gain magnitude increases, the root locus starts from the open\nobreakdash-loop pole at \mbox{$z = -m_2/m_1$} and and approaches the open\nobreakdash-loop zero at \mbox{$z=1$}, as shown in the solid green line in Fig.\,\ref{fig:rl_fpcc}.
For any negative feedback gain, this system is always stable.
If the current control loop is in positive feedback (\mbox{$\psi^\prime(x)<0$}), as gain magnitude increases, the root locus starts from the open\nobreakdash-loop pole \mbox{$z = -m_2/m_1$} and approaches \mbox{$z=-\infty$}, as shown in the dashed blue line in Fig.\,\ref{fig:rl_fpcc}.
For \mbox{$\psi^\prime(x) < (m_1 - m_2)/2$}, the current control loop is unstable.

\subsubsection{Transient Performance of the \textbf{Fixed-Frequency Peak} Current Control Loop}
For a stable converter, we focus on two important transient performance criteria --- settling and overshoot in the sampled-data space.
We solve the auxiliary variable $\tilde{d}[n]$ from the linearized constraint function (\ref{eqn:ffpccicshift}) and then substitute it into (\ref{eqn:ffpccipshift}) and (\ref{eqn:ffpcc_output}). 
The current control loop can then be described as an ordinary difference equation
\begin{align}
\tilde i_p[n] &=  \beta (\tilde i_c [n] -b \tilde i_c [n-1]) + a \tilde i_p[n-1],
\end{align}
where
\begin{align}
\beta = \frac{1}{1 + \sigma}, \quad a = -\frac{M - \sigma}{1 + \sigma}, \quad \sigma = \frac{1}{m_1}\frac{\text{d}\,w}{\text{d}\,t} \Bigg\rvert_{t = T_{\text{on}}},
\end{align}
\begin{align}
\quad b = -M, \quad M = \frac{m_2}{m_1} < 1. 
\end{align}
We apply the $z$-transform in this sampled-data space
\begin{align} \label{eqn:tf_fpcc}
    C(z) = \beta \, \frac{1 - b z^{-1}}{1 - a z^{-1}}.
\end{align}


We observe that $\sigma$, which depends on interference, affects the location of the poles and zeros of $C(z)$.
Larger $|a|$ means longer settling for the current control loop.

The closed-loop pole $a$ lies within the range given by
\begin{align} \label{eqn:arange_ff}
    a_{\text{min}} \triangleq \frac{-{\Lambda}_{ub} - m_2}{- {\Lambda}_{ub} + m_1} \le a \le a_{\text{max}} \triangleq \frac{{\Lambda}_{ub}- m_2}{{\Lambda}_{ub} + m_1}.
\end{align}
The worst-case settling is given by (\ref{eqn:settlecycle1}). Because (\ref{eqn:tf_fpcc}) is a single\nobreakdash-pole\nobreakdash-single\nobreakdash-zero system, The worst\nobreakdash-case overshoot is given by
\begin{align} \label{eqn:os1p1z}
    O_w \triangleq \text{max}\bigg\{\frac{b-a_{\text{min}}}{1-b}\,,\,0\bigg\}.
\end{align}
The stability and transient performance results of the fixed\nobreakdash-frequency peak current control loop are summarized in Appendix\,\ref{ref:sec_design_eqn}, Table\,\ref{table:ff_peak_cm_theory}.
\subsubsection{Large\nobreakdash-Signal Stability and Transient Performance of \textbf{Fixed\nobreakdash-Frequency Valley} Current Control Loop}
These stability and transient performance analyses can also be applied to fixed\nobreakdash-frequency valley current-mode control.

We can substitute $m_2$ for $m_1$ in Theorem\,\ref{theorem:gloasystabfpcc} to derive the stability condition. Settling and overshoot can be derived from (\ref{eqn:settlecycle1}), (\ref{eqn:arange_ff}), and (\ref{eqn:os1p1z}).
The stability and transient performance results of fixed\nobreakdash-frequency valley current control loop are summarized in Appendix\,\ref{ref:sec_design_eqn}, Table\,\ref{table:ff_valley_cm_theory}.

This similar analysis can be applied to any buck, boost, buck\nobreakdash-boost, and derived converters that use fixed-frequency peak or valley current control.
\section{Conclusion} \label{sec:conclusion_cmc_part1}
In this paper, we presented a rigorous framework for analyzing the effects of interference in current sensors on the control model for high-frequency extremum current-mode power converters and for guaranteeing large-signal stability.  In this framework, the current control loop models as a Lure system, where methods in nonlinear control can be applied.  This framework leads to a control conditioning approach, which is particularly important when the interference and switching frequencies are near each other.  Three control conditioning methods are presented in Part II of this paper series to stabilize the current control loop even when the interference is severe.




\section*{Acknowledgements}
Special thanks to Professor Peter Seiler for the contribution in large-signal stability analysis. This work is based upon work supported by the U.S. Department of Energy SunShot Initiative, under Award Number(s) DE-EE-0007549. 
\newpage
\begin{appendices}

\section{Large-Signal Stability of Current-Mode \\ DC-DC Converters} \label{appendix:cmc_modeling}
The large-signal modeling approach we will introduce applies to all types of current-mode dc-dc converters. In this section, we take the current-mode buck converter using constant on\nobreakdash-time and current-mode boost converter using constant off\nobreakdash-time as two typical examples \cite{Avestruz2022}.
\subsection{Current-Mode Buck Converter Using Constant On-Time}
Consider a class $\Sigma$ buck converter defined in \cite{Cui2018a}, the off\nobreakdash-time at steady state is denoted by $T_{\text{off}}$.
The time varying off\nobreakdash-time is bounded from above by
\begin{align} \label{eqn:toff_bd_buck}
    T^{\text{min}}_{\text{off}} \le t_{\text{off}}[n] \le T^{\text{max}}_{\text{off}},
\end{align}
The large-signal stability can be described by the following three propositions:

\begin{figure}
    \centering
    \includegraphics[scale = 1]{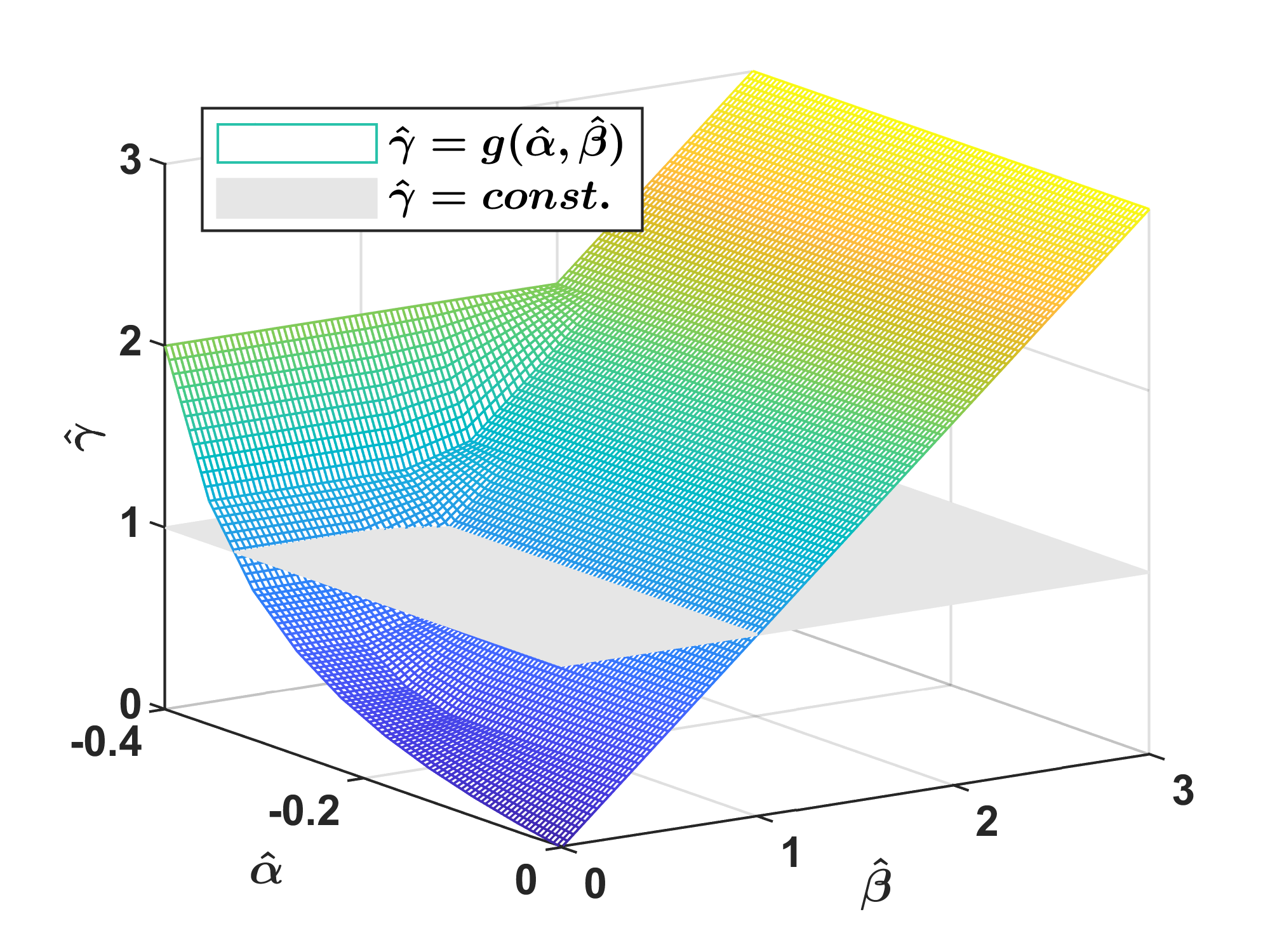}
    \caption{$g(\hat{\alpha}.\hat{\beta})$ is the function of sector bounds $[\hat{\alpha}.\hat{\beta}]$.}
    \label{fig:Current_block_gain}
\end{figure}

\begin{proposition}
Given the class $\Sigma$ buck converter modeled in \cite{Cui2018a}, the $\mathcal{L}_2$ gain from the sampled output voltage sequence $\{\tilde{v}[n]\}$ to one-cycle-delayed inductor current sequence $\{\tilde{i}^{p}_v[n]\}$ is bounded from above by
\begin{align}\label{egn:gamma_vi_buck}
    \Gamma_{v \to i} \le \frac{T_s^{\text{ss}}}{L} g(\hat{\alpha}, \hat{\beta}),
\end{align}
where $g(\hat{\alpha}, \hat{\beta})$ follows Fig.\,\ref{fig:Current_block_gain}, $T_s^{\text{ss}}$ is the steady-state switching period, 
\begin{align}
    T_s^{\text{ss}} & \triangleq T_{\text{on}} +  T_{\text{off}}.
\end{align}
\end{proposition}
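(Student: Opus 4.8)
The plan is to make the output-voltage-to-peak-current block explicit as a sampled-data map and then bound its $\mathcal{L}_2$ gain by a pointwise sector argument rather than by any loop-shaping. First I would write the inductor-current balance over one switching cycle, using that for a buck converter the output voltage enters only through the ramp slopes $m_1 = (V_{\text{in}} - v)/L$ and $m_2 = v/L$. Integrating $L\,\mathrm{d}i/\mathrm{d}t = v_L$ over the on- and off-subintervals expresses the cycle-to-cycle change of the sampled current as $\tfrac{1}{L}$ times the time-integral of the inductor voltage, which is precisely where the $1/L$ prefactor and the switching-time scale $T_s^{\text{ss}} = T_{\text{on}} + T_{\text{off}}$ originate. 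The one-cycle delay attached to $\tilde i^{p}_v[n]$ is merely a unit-gain shift in $z^{-1}$ introduced to make the eventual interconnection with the current-to-voltage block causal, so it does not enlarge the gain and I can treat the core object as a static (memoryless) nonlinearity.

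The second step is to isolate that nonlinearity. In constant on-time operation the on-time is fixed while the off-time $t_{\text{off}}[n]$ adjusts to the output voltage: a larger $v$ steepens $m_2$ and shortens the interval over which the current decays, so the map $\tilde v[n] \mapsto \tilde i^{p}_v[n]$ is nonlinear, with its effective slope governed by $t_{\text{off}}[n]$. Invoking the a priori bound (\ref{eqn:toff_bd_buck}), $T^{\text{min}}_{\text{off}} \le t_{\text{off}}[n] \le T^{\text{max}}_{\text{off}}$, this map is sector-bounded in $[\hat{\alpha},\hat{\beta}]$, with the extreme sector slopes produced by the extremes $T^{\text{min}}_{\text{off}}$ and $T^{\text{max}}_{\text{off}}$ together with the fixed on-time contribution.

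For the third step I would use that the $\mathcal{L}_2$ gain of a memoryless sector-bounded map is dominated by the worst-case magnitude of its sector slopes, so that $\|\tilde i^{p}_v\|_2 \le \bigl(\sup_n |\partial \tilde i^{p}_v / \partial \tilde v|\bigr)\,\|\tilde v\|_2$. Packaging this worst-case slope through the graphically defined function $g(\hat{\alpha},\hat{\beta})$ of Fig.\,\ref{fig:Current_block_gain} and factoring out the dimensional scale $T_s^{\text{ss}}/L$ (time over inductance, i.e.\ the correct current-per-volt dimension) yields the claimed bound $\Gamma_{v\to i}\le \tfrac{T_s^{\text{ss}}}{L}\,g(\hat{\alpha},\hat{\beta})$, where $g$ is dimensionless and carries only the sector information.

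The hard part will be the rigorous handling of the variable off-time: because $t_{\text{off}}[n]$ is itself a function of the perturbed voltage, the integration limits of $\tfrac{1}{L}\int v_L\,\mathrm{d}t$ move with the input, so I must show the resulting map is genuinely sector-bounded — not merely locally Lipschitz about the equilibrium — and that the time-integral of the voltage over a perturbed off-time is cleanly dominated by $\tfrac{T_s^{\text{ss}}}{L}\,g(\hat{\alpha},\hat{\beta})$ uniformly in $n$. Establishing this uniform domination, rather than only the linearized small-signal slope, is exactly what promotes the estimate from a local gain to the large-signal $\mathcal{L}_2$ bound asserted in the proposition, and it is where the explicit off-time bounds (\ref{eqn:toff_bd_buck}) do the essential work.
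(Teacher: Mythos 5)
First, a point of comparison: the paper itself states this proposition without proof --- Appendix\,\ref{appendix:cmc_modeling} simply lists it as one of three propositions describing the class $\Sigma$ buck converter, deferring the underlying sampled-data model to \cite{Cui2018a} --- so your attempt can only be judged against the framework the paper builds, not against a written-out argument. Parts of your skeleton are sound and would surely appear in any honest proof: integrating $L\,\mathrm{d}i/\mathrm{d}t$ over the on- and off-subintervals is indeed where the $T_s^{\text{ss}}/L$ prefactor comes from, the one-cycle delay is indeed unit-gain in $\ell_2$, and the off-time bounds (\ref{eqn:toff_bd_buck}) are indeed needed for uniformity over cycles.

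The genuine gap is in where you locate the nonlinearity and hence what $g(\hat{\alpha},\hat{\beta})$ is. You attribute the sector bounds $[\hat{\alpha},\hat{\beta}]$ to the variation of $t_{\text{off}}[n]$ between $T_{\text{off}}^{\text{min}}$ and $T_{\text{off}}^{\text{max}}$, and you then treat the map $\tilde{v}[n]\mapsto\tilde{i}^{p}_v[n]$ as \emph{memoryless}, bounding its gain by a worst-case slope. In the paper's framework, $[\hat{\alpha},\hat{\beta}]$ are the sector bounds of the interference-induced nonlinearity $\psi$ in the Lure representation of the current control loop (this is what the caption of Fig.\,\ref{fig:Current_block_gain} means by ``$g$ is the function of sector bounds''), and your proposal never mentions interference at all --- the central object of the entire paper. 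Moreover, the map from $\tilde{v}$ to the sampled current is not memoryless: the voltage perturbs the ramp slopes, the slopes perturb the trigger instants through the comparator-plus-interference nonlinearity, and that error propagates through the current state $\tilde{i}_v[n]$ from cycle to cycle. The quantity $\Gamma_{v\to i}$ is therefore the closed-loop $\mathcal{L}_2$ gain of a dynamical Lure interconnection, and it must be obtained by a loop argument (circle-criterion or small-gain type, as in Theorem\,\ref{theorem:gloasystab} and the companion Propositions of Appendix\,\ref{appendix:cmc_modeling}), not by a pointwise slope bound. A worst-case-slope estimate on a static map cannot produce a function $g(\hat{\alpha},\hat{\beta})$ of the loop's sector data, and in the interference-free limit your construction collapses to a feedforward slope term while leaving the role of $g$ unexplained; so the heart of the proposition --- how the interference sector inflates the voltage-to-current gain --- is missing from your argument.
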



\begin{proposition}
Given the class $\Sigma$ buck converter using constant on-time modeled in \cite{Cui2018a}, the $\mathcal{L}_2$ gain from the one-cycle-delayed inductor current sequence $\{\tilde{i}^{p}_v[n]\}$ to sampled output voltage sequence $\{\tilde{v}[n]\}$ is bounded from above by
\begin{align}
    \Gamma_{i \to v} &  \le \frac{R}{\left( 1 + \frac{T_{\text{on}}}{2\tau_2}\right)} \frac{T_s^{\text{max}}}{T_s^{\text{min}}},
\end{align}
where $T_s^{\text{min}}$ and $T_s^{\text{max}}$ are the shortest switching period and longest switching period, respectively, and $\tau_2$ is the $L/R$ time constant,
\begin{align}
    T_s^{\text{max}} & \triangleq T_{\text{on}} +  T_{\text{off}}^{\text{max}}, \\
    T_s^{\text{min}} & \triangleq T_{\text{on}} +  T_{\text{off}}^{\text{min}},\\
    \tau_2 &\triangleq \frac{L}{R}.
\end{align}
\end{proposition}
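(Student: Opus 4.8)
The plan is to derive the linear sampled-data map from the inductor-current sequence to the output-voltage sequence directly from the switching-synchronized sampled-state-space model of \cite{Cui2018a}, and then bound its induced $\mathcal{L}_2$ (equivalently $\ell_2$-to-$\ell_2$) gain. First I would isolate, from the coupled difference equations that govern $\tilde v[n]$ and the sampled inductor current, the single channel in which $\tilde{i}^{p}_v[n]$ acts as the input and $\tilde v[n]$ as the output; physically this is the output-capacitor charge-balance relation, in which the capacitor integrates the difference between the inductor current and the load current $\tilde v/R$ over each switching interval. For a frozen switching period this channel is a first-order discrete-time LTI system, so its induced $\mathcal{L}_2$ gain equals its $\mathcal{H}_\infty$ norm, i.e. the peak of its frequency response on the unit circle.

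Next I would evaluate that peak. The channel is first-order low-pass with a single real pole $e^{-T_s/(RC)}$ inside the unit circle, so its frequency response is maximized at dc and the $\mathcal{H}_\infty$ norm coincides with the dc gain. The nominal dc current-to-voltage conversion is $R$, but the sampled inductor current is the valley value whereas the capacitor is charged by the cycle-average current; converting between them through the on-time ramp --- whose slope is fixed by $L$ --- introduces the correction $1/(1+T_{\text{on}}/2\tau_2)$, which is exactly where $\tau_2=L/R$ and the factor of two (half the ripple) enter. This parallels the ripple correction that appears in the companion bound $\Gamma_{v\to i}$ of the preceding proposition, with the roles of current and voltage interchanged.

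The hard part will be the non-uniform sampling: the off-time $t_{\text{off}}[n]$ varies within $[T_{\text{off}}^{\min},T_{\text{off}}^{\max}]$ by (\ref{eqn:toff_bd_buck}), so $T_s[n]\in[T_s^{\min},T_s^{\max}]$ and the map is genuinely time-varying --- its $\mathcal{L}_2$ gain is no longer simply an $\mathcal{H}_\infty$ norm. I would handle this with a worst-case comparison argument: bound the energy a variable-period trajectory can accumulate against that of the corresponding fixed-period (uniformly sampled) system, with the inflation of the induced norm controlled by the ratio of the longest to the shortest interval. This produces the multiplicative penalty $T_s^{\max}/T_s^{\min}$ on the frozen-period dc gain, and assembling the pieces gives $\Gamma_{i\to v}\le \bigl[R/(1+T_{\text{on}}/2\tau_2)\bigr]\,(T_s^{\max}/T_s^{\min})$. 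Together with Proposition 1 this furnishes the small-gain product $\Gamma_{v\to i}\,\Gamma_{i\to v}\sim T_s/\tau_2$, which is below unity precisely when the switching period is much smaller than the $L/R$ time constant, justifying the decoupled current-loop analysis invoked in properties (6)--(7) of the dynamical mapping.
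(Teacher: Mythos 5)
You should know at the outset that the paper itself offers no proof of this proposition: it is stated in Appendix~\ref{appendix:cmc_modeling} as a bare result for the class $\Sigma$ buck converter, with the underlying model deferred to \cite{Cui2018a} and \cite{Avestruz2022}. So your attempt cannot be compared against an in-paper argument; it can only be judged on its own merits. On those merits, your skeleton is physically and structurally sound: the relevant channel is indeed the capacitor charge-balance relation, the frozen-period channel is first-order low-pass so the gain is set at dc, and your closing observation that the product $\Gamma_{v\to i}\,\Gamma_{i\to v}\sim g(\hat{\alpha},\hat{\beta})\,T_s/\tau_2$ is exactly what feeds the small-gain stability condition in the appendix's third proposition is correct and shows you understand why this bound exists.

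The genuine gap is the step you flag as "the hard part" and then do not supply. Note that the frozen-period dc gain is period-\emph{independent}: in the one-cycle update for $\tilde v[n]$, both the injected charge and the load damping scale linearly with $t_s[n]$, so every frozen system has the same dc gain $R/\bigl(1+T_{\text{on}}/(2\tau_2)\bigr)$. Consequently the factor $T_s^{\text{max}}/T_s^{\text{min}}$ cannot be produced by any frequency-domain or frozen-coefficient reasoning whatsoever --- it is purely a consequence of the time variation, and the "worst-case comparison argument" you invoke is therefore the entire nontrivial content of the proposition, not a finishing touch. To close it you would need an explicit LTV dissipation argument, e.g.\ a storage function $V[n]=\theta\,\tilde v[n]^2$ (or a weighted $\ell_2$ norm) for which $V[n+1]-V[n]\le \gamma^2\,\tilde i^{\,p\,2}_v[n]-\tilde v[n]^2$ holds for every admissible $t_{\text{off}}[n]\in[T_{\text{off}}^{\text{min}},T_{\text{off}}^{\text{max}}]$, and then verify that the smallest such $\gamma$ is majorized by the stated bound; whether the inflation is exactly $T_s^{\text{max}}/T_s^{\text{min}}$ is precisely what this inequality must certify. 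A secondary, more minor point: the damping correction $1+T_{\text{on}}/(2\tau_2)$ does not come from converting the valley sample to the cycle average along a ramp of fixed slope --- in perturbation variables a fixed ripple offset cancels identically. It comes from the \emph{voltage dependence} of the on-time slope, $\tilde m_1=-\tilde v/L$, which perturbs the average current by $-\tilde v\,T_{\text{on}}/(2L)$ and thus adds $T_{\text{on}}/(2L)$ to the load conductance $1/R$; your phrase "whose slope is fixed by $L$" has the mechanism backwards even though the resulting factor is right.
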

\begin{proposition}
The current control loop of the class $\Sigma$ buck converter using constant on-time current-mode control is globally asymptotically stable if 
\begin{align}
    & g(\hat{\alpha}, \hat{\beta}) < \left(\tau_2 + \frac{T_{\text{on}}}{2}\right)\frac{T_s^{\text{min}}}{T_s^{\text{max}}} \frac{1}{T_s^{ss}}\:\: \text{and}\\
   & T_s^{\text{max}}\left( 1 + \frac{T_{\text{on}}}{2\tau_2}\right) < \tau_1,
\end{align}
where $g(\hat{\alpha}, \hat{\beta})$ follows Fig.\,\ref{fig:Current_block_gain}, $T_s^{\text{min}}$ and $T_s^{\text{max}}$ are the shortest switching period and longest switching period, respectively,
$\tau_1$ is the $RC$ time constant, and $\tau_2$ is the $L/R$ time constant,
\begin{align}
    T_s^{\text{max}} & \triangleq T_{\text{on}} +  T_{\text{off}}^{\text{max}}, \\
    T_s^{\text{min}} & \triangleq T_{\text{on}} +  T_{\text{off}}^{\text{min}},\\
    \tau_1 &\triangleq RC, \\
    \tau_2 &\triangleq \frac{L}{R}.
\end{align}
\end{proposition}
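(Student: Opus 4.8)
The plan is to treat the constant on-time buck converter as the feedback interconnection of two subsystems in the non-uniform sampled-data space and to close the argument with the small-gain theorem. The first subsystem maps the sampled output-voltage sequence $\{\tilde v[n]\}$ into the one-cycle-delayed inductor-current sequence $\{\tilde i^{p}_v[n]\}$; this is the current control loop, whose finite $\mathcal{L}_2$ gain $\Gamma_{v \to i} \le \frac{T_s^{\text{ss}}}{L}\,g(\hat\alpha,\hat\beta)$ was already established in the first preceding proposition. The second subsystem maps $\{\tilde i^{p}_v[n]\}$ back into $\{\tilde v[n]\}$ through the output $RC$ filter, with gain $\Gamma_{i \to v} \le \frac{R}{1 + T_{\text{on}}/(2\tau_2)}\,\frac{T_s^{\text{max}}}{T_s^{\text{min}}}$ from the second preceding proposition. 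First I would verify that this cascade indeed closes a loop and that both subsystems are individually finite-gain $\mathcal{L}_2$ stable, so that the hypotheses of the small-gain theorem are satisfied.

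By the small-gain theorem, the closed loop is finite-gain $\mathcal{L}_2$ stable whenever the loop gain obeys $\Gamma_{v \to i}\,\Gamma_{i \to v} < 1$. Substituting the two bounds and solving for $g(\hat\alpha,\hat\beta)$ yields
\begin{align}
    g(\hat\alpha,\hat\beta) < \frac{L}{R\,T_s^{\text{ss}}}\left(1 + \frac{T_{\text{on}}}{2\tau_2}\right)\frac{T_s^{\text{min}}}{T_s^{\text{max}}}.
\end{align}
Using $\tau_2 = L/R$, the prefactor collapses since $\tau_2\left(1 + T_{\text{on}}/(2\tau_2)\right) = \tau_2 + T_{\text{on}}/2$, which reproduces the first stated condition exactly. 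This step is purely mechanical once the two gain estimates are in hand.

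The second condition $T_s^{\text{max}}\left(1 + T_{\text{on}}/(2\tau_2)\right) < \tau_1$ is the quantitative form of the standing assumption that the switching period is much smaller than the $RC$ time constant (with an $L/R$ correction). Its role in the proof is to guarantee that the voltage subsystem is itself finite-gain $\mathcal{L}_2$ stable with a nonnegative, non-oscillatory per-cycle decay factor, so that the $\Gamma_{i \to v}$ bound is valid and the small-gain hypotheses hold; intuitively it keeps the sampled capacitor-voltage pole inside $(0,1)$ rather than letting the discretization introduce a spurious sampled-data instability. I would establish this by bounding the per-cycle $RC$ update and showing that the inequality forces a strict contraction over the longest admissible period $T_s^{\text{max}}$.

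The hard part will not be the algebra but the passage from finite-gain $\mathcal{L}_2$ (i.e.\ BIBO) stability of the interconnection to the claimed \emph{global asymptotic} stability, carried out entirely in the non-uniform sampled-data space where no continuous-time equivalent exists. This requires arguing that each subsystem is zero-state detectable so that boundedness upgrades to asymptotic convergence, and that the sampling instants remain well defined along every trajectory, i.e.\ that the off-times stay within $[T_{\text{off}}^{\text{min}}, T_{\text{off}}^{\text{max}}]$ assumed for the gain estimates. Showing that the two $\mathcal{L}_2$-gain bounds hold uniformly over all admissible large-signal trajectories, and not merely in a neighborhood of the equilibrium, is the delicate point I expect to require the most care.
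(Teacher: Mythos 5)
Your proposal is correct and takes essentially the approach the paper intends: the proposition is stated immediately after the two $\mathcal{L}_2$-gain propositions precisely so that the loop closes via the small-gain condition $\Gamma_{v \to i}\,\Gamma_{i \to v} < 1$, and your substitution of the two gain bounds with $\tau_2 = L/R$ reproduces the first inequality exactly, while the second inequality plays the validity/contraction role you describe for the sampled voltage subsystem (it is the quantitative form of the paper's standing assumption that the switching period is much smaller than the $RC$ time constant). Your closing caution about upgrading finite-gain $\mathcal{L}_2$ stability to global asymptotic stability is also apt; the paper delegates that passage to its switching-synchronized sampled-state-space framework rather than re-proving it in this appendix.
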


\subsection{Current-Mode Boost Converter Using Constant Off-Time}
Consider a class $\Sigma$ boost converter using constant off-time defined in \cite{Cui2018a}, the on-time in steady state is denoted by $T_{\text{off}}$.
The time-varying on-time is bounded from above by
\begin{align} \label{eqn:toff_bd_boost}
    T^{\text{min}}_{\text{on}} \le t_{\text{on}}[n] \le T^{\text{max}}_{\text{on}}.
\end{align}
The large-signal stability can be described by the following three propositions:
\begin{proposition}
Given the class $\Sigma$ boost converter using constant off-time control, the $\mathcal{L}_2$ gain from the sampled output voltage sequence $\{\tilde{v}[n]\}$ to one-cycle-delayed inductor current sequence $\{\tilde{i}^{p}_p[n]\}$ is bounded from above by
\begin{align}\label{egn:gamma_vi_boost}
    \Gamma_{v \to i} \le \frac{T_{\text{off}}}{L} g(\hat{\alpha}, \hat{\beta}),
\end{align}
where $g(\hat{\alpha}, \hat{\beta})$ follows Fig.\,\ref{fig:Current_block_gain}, $T_s^{\text{ss}}$ is the steady-state switching period, 
\begin{align}
    T_s^{\text{ss}} & \triangleq T_{\text{on}} +  T_{\text{off}}.
\end{align}
\end{proposition}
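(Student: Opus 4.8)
The plan is to treat the sampled output voltage as an exogenous disturbance that enters the constant off\nobreakdash-time peak\nobreakdash-current recursion only through the falling slope, and then to propagate it through the same Lure loop already analyzed in Theorem~\ref{theorem:gloasystab}. First I would reconstruct the perturbed peak\nobreakdash-current dynamics. For a boost converter the on\nobreakdash-time rising slope is $m_1 = v_{\text{in}}/L$, which is independent of the output, while the off\nobreakdash-time falling slope is $m_2 = (v_{\text{out}}-v_{\text{in}})/L$; hence in the recursion (\ref{eqn:cmcotip}) the output voltage appears only in the term $m_2 T_{\text{off}}$. Writing $v_{\text{out}} = V_{\text{out}} + \tilde v[n]$ and holding the off\nobreakdash-time fixed, the slope perturbs by $\tilde v[n]/L$ over the constant $T_{\text{off}}$, so the shifted recursion (\ref{eqn:cmcotipshift}) picks up an additive forcing term,
\begin{align}
\tilde i_p[n] = \tilde i_p[n-1] + m_1\,\tilde t_{\text{on}}[n] - \frac{T_{\text{off}}}{L}\,\tilde v[n].
\end{align}
This is the key structural simplification relative to the constant on\nobreakdash-time buck case: there both slopes depend on $v_{\text{out}}$, so the input sensitivity is $T_s^{\text{ss}}/L$, whereas here only $m_2$ carries the output dependence and the sensitivity collapses to $T_{\text{off}}/L$. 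Defining the effective disturbance $d[n] \triangleq -(T_{\text{off}}/L)\,\tilde v[n]$, the map $\tilde v \mapsto d$ is a static gain of magnitude $T_{\text{off}}/L$.

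Next I would identify the residual map from $d$ to $\tilde i_p$ with the Lure interconnection of Fig.~\ref{fig:luresystem}, setting the command perturbation $\tilde i_c[n]=0$. With $d$ entering the state equation additively and the command constraint $0 = \tilde i_p[n] + \psi(\tilde t_{\text{on}}[n])$ closing the loop, the linear part is still $G(z) = (1-z^{-1})/m_1$ and the feedback is the interference nonlinearity $\psi$, read as lying in the sector $[\hat{\alpha},\hat{\beta}]$ (with $\hat{\alpha},\hat{\beta}$ fixed by the amplitude/bandwidth bound on $w$ so that $|\psi'|\le \Lambda_{ub}$). Because a pure one\nobreakdash-cycle delay has unit induced $\mathcal{L}_2$ gain, the delayed output $\tilde i^p_p[n]$ inherits the same gain from $d$ as $\tilde i_p[n]$, so it suffices to bound the induced gain of this closed loop and to identify it with $g(\hat{\alpha},\hat{\beta})$ of Fig.~\ref{fig:Current_block_gain}.

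The central step is to upgrade the circle\nobreakdash-criterion stability argument of Theorem~\ref{theorem:gloasystab} into a quantitative induced\nobreakdash-gain bound. I would introduce a quadratic storage $V[n] = P\,\tilde i_p[n]^2$, adjoin the sector inequality $(\psi - \hat{\alpha}\,\tilde t_{\text{on}})(\hat{\beta}\,\tilde t_{\text{on}} - \psi) \ge 0$ with a nonnegative multiplier, and seek the smallest $\gamma$ for which the dissipation inequality
\begin{align}
V[n] - V[n-1] \le \gamma^2\, d[n]^2 - \tilde i_p[n]^2
\end{align}
holds along the scalar dynamics. Because the state is one\nobreakdash-dimensional and $G(z)$ is first order, the resulting S\nobreakdash-procedure condition is a low\nobreakdash-order linear matrix inequality solvable in closed form; its optimal $\gamma$ as a function of the sector endpoints is precisely the curve $g(\hat{\alpha},\hat{\beta})$. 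Summing the dissipation inequality over $n$ with zero initial condition yields $\|\tilde i_p\|_2 \le g(\hat{\alpha},\hat{\beta})\,\|d\|_2$, and cascading with $\|d\|_2 = (T_{\text{off}}/L)\,\|\tilde v\|_2$ gives $\Gamma_{v\to i} \le (T_{\text{off}}/L)\,g(\hat{\alpha},\hat{\beta})$.

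The main obstacle I anticipate is exactly this middle step: the Nyquist/disk argument of Theorem~\ref{theorem:gloasystab} only certifies stability, so extracting a tight $\mathcal{L}_2$ gain requires the stronger dissipation inequality and care that the loop transformation sending $[\hat{\alpha},\hat{\beta}]$ to $[0,\hat{\beta}-\hat{\alpha}]$ preserves the input\nobreakdash-output gain from $d$ rather than merely preserving stability. A secondary subtlety is justifying that treating $m_1$ and $T_{\text{off}}$ as constant over the disturbance---i.e. neglecting the coupling of $\tilde v$ back into $\tilde t_{\text{on}}$ within a cycle and the slow ripple on $v_{\text{out}}$---does not lower the worst\nobreakdash-case gain; this is where the time\nobreakdash-scale separation hypotheses (switching period much smaller than $L/R$ and $RC$) are invoked to bound the neglected terms.
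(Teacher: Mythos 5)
The paper itself never proves this proposition: Appendix~\ref{appendix:cmc_modeling} states all six class-$\Sigma$ propositions without proof, importing them from the sampled-state-space modeling of \cite{Cui2018a} and \cite{Avestruz2022}, and the function $g(\hat{\alpha},\hat{\beta})$ is defined only graphically in Fig.~\ref{fig:Current_block_gain}. So there is no in-paper proof to compare yours against; what follows is an assessment of your argument on its own terms.

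Your structural reduction is correct and is almost certainly the intended route. For a boost converter only the falling slope $m_2=(v_{\text{out}}-v_{\text{in}})/L$ carries the output-voltage dependence, so in the constant off-time recursion (\ref{eqn:cmcotip}) the sampled voltage perturbation enters as the additive forcing $-(T_{\text{off}}/L)\,\tilde{v}[n]$; this is exactly what produces the prefactor $T_{\text{off}}/L$ here versus $T_s^{\text{ss}}/L$ in the buck/constant on-time proposition, and your cascade (static gain $T_{\text{off}}/L$, times the induced $\ell_2$ gain of the sector-bounded Lure current loop, times the unit-gain one-cycle delay) reproduces the claimed form. The dissipation-inequality/S-procedure step is likewise the standard, correct way to upgrade the circle-criterion setup of Theorem~\ref{theorem:gloasystab} into a quantitative induced-gain bound, and for a first-order $G(z)$ it is solvable in closed form. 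The one genuine gap is the final identification: you assert that your optimal $\gamma$ as a function of the sector endpoints \emph{is} the curve $g(\hat{\alpha},\hat{\beta})$, but since the paper defines $g$ only by a figure, this equality is not something your argument establishes --- what you actually prove is a bound of the form $(T_{\text{off}}/L)\,\gamma(\hat{\alpha},\hat{\beta})$ for your own $\gamma$, which upper-bounds $\Gamma_{v\to i}$ but need not coincide with the paper's $g$. Relatedly, for the proposition to be dimensionally coherent, $\hat{\alpha},\hat{\beta}$ must be the sector bounds of the \emph{normalized} nonlinearity (e.g., $\psi/m_1$ acting on $m_1\tilde{t}_{\text{on}}$, so that $g$ is dimensionless and all dimensional content sits in $T_{\text{off}}/L$); you leave the hatted bounds loosely tied to $\Lambda_{ub}$, and pinning down that normalization is necessary before your $\gamma$ and the paper's $g$ could even be compared.
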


\begin{proposition}
Given the class $\Sigma$ boost converter using constant off-time control, the $\mathcal{L}_2$ gain from the one-cycle-delayed inductor current sequence $\{\tilde{i}^{p}_p[n]\}$ to sampled output voltage sequence $\{\tilde{v}[n]\}$ is bounded from above by the following:

\vspace{+8pt}
\noindent (i) if $\left((1-\lambda ) T_{\text{off}} + \frac{V_{\text{out}}L}{V_{\text{in}}R} \right)\left( 1-\frac{T_s^{\text{ss}}}{RC} -\frac{T_{s}^{\text{max}}}{RC}-\frac{ T^2_{\text{off}}}{2LC}\right) + \left( \lambda T_{\text{off}} - \frac{V_{\text{out}}L}{V_{\text{in}}R} \right) \ge 0 $\:\:and
\begin{align} \label{eqn:ivgain_case1_cotcm}
    \Gamma_{i \to v} &  \le \frac{R}{\left( \frac{T_s^{\text{ss}} + T_s^{\text{min}}}{T_{\text{off}}} + \frac{T_{\text{off}}}{2\tau_2}\right)},
\end{align}

\noindent or (ii) if $\left((1-\lambda ) T_{\text{off}} + \frac{V_{\text{out}}L}{V_{\text{in}}R} \right)\left( 1-\frac{T_s^{\text{ss}}}{RC} -\frac{T_{s}^{\text{max}}}{RC}-\frac{ T^2_{\text{off}}}{2LC}\right) + \left( \lambda T_{\text{off}} - \frac{V_{\text{out}}L}{V_{\text{in}}R} \right) < 0 $\:\:and
\begin{align} \label{eqn:ivgain_case2_cotcm}
    \Gamma_{i \to v} &  \le \frac{2\tau_2\left( T_s^{\text{max}} +  T_s^{\text{ss}}\right) + T_{\text{off}}^2}{2\tau_2\left( T_s^{\text{min}} +  T_s^{\text{ss}}\right) + T_{\text{off}}^2} \frac{2\tau_2\frac{V_{\text{out}}}{V_{\text{in}}}+(1-2\lambda)T_{\text{off}}}{2\tau_1-T_s^{\text{ss}}-T_s^{\text{max}}-\frac{T_{\text{off}}^2}{2\tau_2}} R,
\end{align}
where $T_s^{\text{min}}$ and $T_s^{\text{max}}$ are the shortest switching period and longest switching period, respectively, $\tau_1$ is the $RC$ time constant, and $\tau_2$ is the $L/R$ time constant,
\begin{align}
    T_s^{\text{max}} & \triangleq T_{\text{off}} +  T_{\text{on}}^{\text{max}}, \\
    T_s^{\text{min}} & \triangleq T_{\text{off}} +  T_{\text{on}}^{\text{min}}, \\
    \tau_1 &\triangleq RC, \\
    \tau_2 &\triangleq \frac{L}{R}.
\end{align}
\end{proposition}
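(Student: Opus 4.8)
The plan is to reduce the claimed $\mathcal{L}_2$-gain bound to an $\mathcal{H}_\infty$-norm computation on a scalar sampled-data transfer function, and then to resolve the two cases through a worst-case-frequency analysis. First I would recall the class $\Sigma$ boost sampled-data model of \cite{Cui2018a} and isolate the output-voltage subsystem driven by the one-cycle-delayed current $\{\tilde{i}^{p}_p[n]\}$. Because the output capacitor is charged by the inductor current only during the fixed off-time $T_{\text{off}}$ and discharged through the load during the on-time $t_{\text{on}}[n]$, integrating the linearized capacitor equation across both sub-intervals of a cycle yields a first-order difference equation for $\tilde{v}[n]$ whose pole reflects the $RC$ discharge over a switching period and whose zero reflects the partial-interval charging. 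Linearizing the bilinear converter dynamics about the periodic steady-state trajectory fixes the coefficients in terms of $\tau_1 = RC$, $\tau_2 = L/R$, and the conversion ratio $V_{\text{out}}/V_{\text{in}} = T_s^{\text{ss}}/T_{\text{off}}$.

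Next I would write the resulting transfer function $C(z)$ and invoke the identity that, for a linear time-invariant sampled-data map, the $\mathcal{L}_2$ gain equals $\|C\|_\infty = \sup_{\theta\in[0,\pi]}\lvert C(e^{j\theta})\rvert$. Since the on-time varies cycle-by-cycle within $[T_{\text{on}}^{\text{min}}, T_{\text{on}}^{\text{max}}]$, the operator is in fact time-varying; I would bound its induced $\ell_2$ norm uniformly by taking the worst case of the coefficients over the admissible timing, which is precisely how $T_s^{\text{min}}$ and $T_s^{\text{max}}$ enter the two expressions. This mirrors the simpler single-formula bound established earlier for the constant-on-time buck converter, where the inductor feeds the output every cycle and no case split is needed; the extra zero introduced by off-time-only charging in the boost is what forces the peak of the magnitude response to migrate and hence produces two regimes.

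The core of the argument is the case split, which comes from locating the peak of the magnitude response. For the resulting single-pole/single-zero $C(z)$, the quantity $\lvert C(e^{j\theta})\rvert^2$ is a M\"obius function of $x = \cos\theta$ and is therefore monotone on $x \in [-1,1]$; its maximum over $\theta \in [0,\pi]$ is thus attained at one of the two endpoints, $\theta = 0$ or $\theta = \pi$. The sign of the constant-sign derivative $\tfrac{d}{dx}\lvert C\rvert^2$ decides which endpoint, and a direct computation identifies this sign with the bracketed quantity in the statement, where $\lambda$ encodes the within-cycle position at which the current perturbation is injected. One endpoint yields the compact bound (i); the other yields the more involved bound (ii), each obtained by evaluating $\lvert C\rvert$ together with the worst-case timing factors at the corresponding frequency.

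I expect the main obstacle to be the combination of time-variation and bilinearity: because $t_{\text{on}}[n]$ is not fixed, one cannot argue purely with a single frozen transfer function, and the induced-norm bound must be shown uniform over all admissible timing sequences while the linearization coefficients themselves depend on the operating point. Establishing rigorously that the worst case over timing reduces to the stated substitutions of $T_s^{\text{min}}$ and $T_s^{\text{max}}$, and that the endpoint-derivative condition genuinely locates the global peak of the magnitude response, is the delicate part; the remaining algebra needed to produce the two closed-form bounds is routine once the maximizing frequency is known.
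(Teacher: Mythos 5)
First, a framing point: the paper never proves this proposition. The appendix states the class $\Sigma$ results by importing them from \cite{Cui2018a} and \cite{Avestruz2022}, so your proposal has to be judged on its own merits rather than against an in-paper argument. Judged that way, it has a genuine gap at its core step. You reduce the $\mathcal{L}_2$ gain to the $\mathcal{H}_\infty$ norm of a frozen sampled-data transfer function $C(z)$ and then assert that the cycle-to-cycle variation of $t_{\text{on}}[n]$ can be absorbed by ``taking the worst case of the coefficients over the admissible timing.'' That reduction is unsound: for a linear \emph{time-varying} recursion, the induced $\ell_2$ norm is in general not bounded by the supremum of the $\mathcal{H}_\infty$ norms of its frozen-coefficient members. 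The frozen norm can be small because of pole--zero cancellation inside each $C(z)$ --- precisely the structure here, since the boost current-to-voltage map has a pole from the $RC$ discharge and a zero from off-time-only charging --- and that cancellation does not survive time variation; in dimension two or higher, even frozen stability of every member fails to imply stability of the time-varying system. You flag this as ``the delicate part,'' but you offer no device to close it, and closing it is the entire content of the proposition, whose bounds must hold uniformly over all admissible on-time sequences.

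The structure of the stated bounds also indicates what a sound argument looks like, and it is a time-domain one, not a frequency-domain one. Integrating the linearized capacitor dynamics over the on and off subintervals gives a one-cycle recursion $\tilde v[n] = a[n]\,\tilde v[n-1] + b[n]\,\tilde i^{\,p}_p[n-1]$ whose coefficients are explicit functions of $t_{\text{on}}[n]$, $\lambda$, and the operating point; bounding $\sup_n |a[n]| \le a_{\text{max}} < 1$ and $\sup_n |b[n]| \le b_{\text{max}}$ over $t_{\text{on}}[n] \in [T_{\text{on}}^{\text{min}}, T_{\text{on}}^{\text{max}}]$ and summing the geometric kernel (Young's convolution inequality) yields $\Gamma_{i\to v} \le b_{\text{max}}/(1-a_{\text{max}})$, a bound valid for arbitrary admissible time variation. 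This is visibly the shape of bound (ii): its denominator $2\tau_1 - T_s^{\text{ss}} - T_s^{\text{max}} - T_{\text{off}}^2/(2\tau_2)$ has the $1-a_{\text{max}}$ geometric-sum form, not the form of a frequency-response evaluation. Consequently your reading of the case split is also off: the dichotomy is not whether the peak of a M\"obius-monotone $|C(e^{j\theta})|^2$ sits at $\theta=0$ or $\theta=\pi$, but which endpoint of the admissible timing interval (equivalently, the sign of a coefficient expression) maximizes the input coefficient relative to the contraction factor --- that is what the bracketed condition, a product of a $b$-type term with $1 - (T_s^{\text{ss}}+T_s^{\text{max}})/\tau_1 - T_{\text{off}}^2/(2\tau_1\tau_2)$ plus a competing term, is testing. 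Your M\"obius/endpoint computation is correct as LTI mathematics, but it answers a different question from the one the proposition poses.
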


\begin{proposition}\label{theore: boost_offtime_stability_cotcmboost}
The current control loop of the class $\Sigma$ boost converter using constant off-time control is globally asymptotically stable

\vspace{+8pt}
\noindent (i) if $\left((1-\lambda ) T_{\text{off}} + \frac{V_{\text{out}}L}{V_{\text{in}}R} \right)\left( 1-\frac{T_s^{\text{ss}}}{RC} -\frac{T_{s}^{\text{max}}}{RC}-\frac{ T^2_{\text{off}}}{2LC}\right) + \left( \lambda T_{\text{off}} - \frac{V_{\text{out}}L}{V_{\text{in}}R} \right) \ge 0 $\:\:and
\begin{align}
    g(\hat{\alpha}, \hat{\beta}) &  \le 
    \frac{1}{2} + \tau_2
   \left(\frac{T_s^{\text{ss}}+T_s^{\text{min}}}{T_s^{\text{ss}}T_{\text{off}}} \right);
\end{align}

\noindent or (ii) if $\left((1-\lambda ) T_{\text{off}} + \frac{V_{\text{out}}L}{V_{\text{in}}R} \right)\left( 1-\frac{T_s^{\text{ss}}}{RC} -\frac{T_{s}^{\text{max}}}{RC}-\frac{ T^2_{\text{off}}}{2LC}\right) + \left( \lambda T_{\text{off}} - \frac{V_{\text{out}}L}{V_{\text{in}}R} \right) < 0 $\:\:and
\begin{align}
    &g(\hat{\alpha}, \hat{\beta}) \le\\\nonumber
    &\frac{2\tau_2\left( T_s^{\text{min}} +  T_s^{\text{ss}}\right) + T_{\text{off}}^2}{2\tau_2\left( T_s^{\text{max}} +  T_s^{\text{ss}}\right) + T_{\text{off}}^2} \frac{2\tau_1-T_s^{\text{ss}}-T_s^{\text{max}}-\frac{T_{\text{off}}^2}{2\tau_2}}{2\frac{V_{\text{out}}}{V_{\text{in}}}+(1-2\lambda)\frac{T_{\text{off}}}{\tau_2}}T_{\text{off}};
\end{align}
where $g(\hat{\alpha}, \hat{\beta})$ follows Fig.\,\ref{fig:Current_block_gain}, $T_s^{\text{min}}$ and $T_s^{\text{max}}$ are the shortest switching period and longest switching period, respectively, $\tau_1$ is the $RC$ time constant, and $\tau_2$ is the $L/R$ time constant,
\begin{align}
    T_s^{\text{max}} & \triangleq T_{\text{on}} +  T_{\text{off}}^{\text{max}}, \\
    T_s^{\text{min}} & \triangleq T_{\text{on}} +  T_{\text{off}}^{\text{min}},\\
    \tau_1 &\triangleq RC, \\
    \tau_2 &\triangleq \frac{L}{R}.
\end{align}
\end{proposition}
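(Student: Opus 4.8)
The plan is to read this proposition as a small-gain statement erected on top of the two $\mathcal{L}_2$ gain bounds established for the boost converter in the two immediately preceding propositions. Concretely, I would view the coupled current-and-voltage dynamics of the class $\Sigma$ boost converter in the non-uniform sampled-data space as the feedback interconnection of two subsystems: the \emph{current block} $H_{v\to i}$, mapping the sampled output-voltage perturbation $\{\tilde{v}[n]\}$ to the one-cycle-delayed inductor-current perturbation $\{\tilde{i}^{p}_p[n]\}$, and the \emph{voltage block} $H_{i\to v}$, mapping $\{\tilde{i}^{p}_p[n]\}$ back to $\{\tilde{v}[n]\}$ through the output filter and load. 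The interference and the sector-bounded controller nonlinearity are absorbed into the current block, which is exactly why its gain carries the factor $g(\hat{\alpha},\hat{\beta})$.

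First I would invoke the two preceding boost propositions to certify that each subsystem is finite-gain $\mathcal{L}_2$ stable, with $\Gamma_{v\to i}$ bounded by \eqref{egn:gamma_vi_boost} and $\Gamma_{i\to v}$ bounded by either \eqref{eqn:ivgain_case1_cotcm} or \eqref{eqn:ivgain_case2_cotcm}. Next I would apply the small-gain theorem: the interconnection is finite-gain $\mathcal{L}_2$ stable, and hence---through the sampled-data-to-continuous-time correspondence and the property that bounded-input-bounded-output stability of the dynamical mapping implies Lyapunov stability, both recorded in Section~\ref{sec:dyn_map}---globally asymptotically stable, provided the loop gain satisfies $\Gamma_{v\to i}\,\Gamma_{i\to v} < 1$.

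The case split in the statement is inherited directly from the case split on $\Gamma_{i\to v}$: the sign of the expression $\left((1-\lambda)T_{\text{off}} + \tfrac{V_{\text{out}}L}{V_{\text{in}}R}\right)\!\left(1-\tfrac{T_s^{\text{ss}}}{RC}-\tfrac{T_s^{\text{max}}}{RC}-\tfrac{T_{\text{off}}^2}{2LC}\right) + \left(\lambda T_{\text{off}} - \tfrac{V_{\text{out}}L}{V_{\text{in}}R}\right)$ selects which of \eqref{eqn:ivgain_case1_cotcm} or \eqref{eqn:ivgain_case2_cotcm} is in force. In each regime I would substitute the chosen $\Gamma_{i\to v}$ bound together with $\Gamma_{v\to i}\le \tfrac{T_{\text{off}}}{L}g(\hat{\alpha},\hat{\beta})$ into $\Gamma_{v\to i}\,\Gamma_{i\to v}<1$, use $\tau_2=L/R$, and solve the resulting linear inequality for $g(\hat{\alpha},\hat{\beta})$. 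Case~(i) then collapses to the stated bound involving $\tfrac{1}{2}+\tau_2\big(\tfrac{T_s^{\text{ss}}+T_s^{\text{min}}}{T_s^{\text{ss}}T_{\text{off}}}\big)$, and case~(ii) to the corresponding ratio expression.

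The main obstacle is not the final algebra but verifying that the hypotheses of the small-gain theorem genuinely hold in each regime. In particular I would check that both subsystems are individually well-posed and finite-gain stable, and that the case~(ii) bound on $\Gamma_{i\to v}$---whose numerator and denominator each involve differences of time constants such as $2\tau_1 - T_s^{\text{ss}} - T_s^{\text{max}} - \tfrac{T_{\text{off}}^2}{2\tau_2}$---stays positive and finite, so that dividing through $\Gamma_{v\to i}\,\Gamma_{i\to v}<1$ preserves the inequality's orientation. This positivity is precisely what the accompanying sign condition secures, which is why the regime selection and the gain bound on $g(\hat{\alpha},\hat{\beta})$ must be stated jointly. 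A secondary point to confirm is that absorbing the interference into the current block leaves its contribution sector-bounded, so that $g(\hat{\alpha},\hat{\beta})$ is the correct finite $\mathcal{L}_2$ gain of that block rather than an unbounded quantity; the amplitude and bandwidth bounds on $w(t)$ from Definition~\ref{def:inteference} are what guarantee this.
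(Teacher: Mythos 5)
Your overall strategy is the right one, and it is evidently the intended one: the paper states this proposition \emph{without any proof}, as the capstone of the two immediately preceding boost-converter propositions, so composing the bound \eqref{egn:gamma_vi_boost} on $\Gamma_{v\to i}$ with the case-split bound \eqref{eqn:ivgain_case1_cotcm}/\eqref{eqn:ivgain_case2_cotcm} on $\Gamma_{i\to v}$ through the small-gain condition $\Gamma_{v\to i}\,\Gamma_{i\to v}<1$ is exactly the argument the appendix is set up to support, and your identification of the feedback interconnection, of where the factor $g(\hat{\alpha},\hat{\beta})$ enters, and of why the case split is inherited from $\Gamma_{i\to v}$ is accurate.

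However, your assertion that the algebra ``collapses to the stated bound'' does not survive the substitution, and a complete proof must confront this. In case (i), the inequality $\frac{T_{\text{off}}}{L}\,g(\hat{\alpha},\hat{\beta})\cdot R\big/\big(\tfrac{T_s^{\text{ss}}+T_s^{\text{min}}}{T_{\text{off}}}+\tfrac{T_{\text{off}}}{2\tau_2}\big)<1$ solves to $g(\hat{\alpha},\hat{\beta})<\tfrac{1}{2}+\tau_2\,\tfrac{T_s^{\text{ss}}+T_s^{\text{min}}}{T_{\text{off}}^{2}}$, whereas the proposition has $T_s^{\text{ss}}T_{\text{off}}$ in the denominator; since $T_s^{\text{ss}}>T_{\text{off}}$, the stated condition is strictly more conservative and therefore still sufficient, but you need that one-line comparison rather than a claim of equality. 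In case (ii), the same substitution produces a trailing factor $1/T_{\text{off}}$ where the proposition carries a factor $T_{\text{off}}$; the printed bound is in fact dimensionally inconsistent ($g(\hat{\alpha},\hat{\beta})$ is dimensionless while the right-hand side as printed has units of time squared), so your derivation cannot literally reproduce it --- you can only obtain the dimensionally coherent version and note the discrepancy. A second, smaller error: the positivity of $2\tau_1-T_s^{\text{ss}}-T_s^{\text{max}}-\tfrac{T_{\text{off}}^{2}}{2\tau_2}$ is \emph{not} what the case-(ii) sign condition secures; that condition only selects which of \eqref{eqn:ivgain_case1_cotcm} or \eqref{eqn:ivgain_case2_cotcm} is in force, while positivity of that denominator follows from the standing time-scale assumptions of the dynamical mapping (the switching period being much smaller than the $RC$ and $L/R$ time constants, properties 6 and 7 in Section~\ref{sec:dyn_map}). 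Finally, keep in mind that the small-gain theorem delivers finite-gain $\mathcal{L}_2$ stability of the interconnection; promoting this to the claimed global asymptotic stability rests on the sampled-data-to-continuous-time correspondence you cite, a step the paper itself asserts rather than proves.
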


\section{Design Equations for Three Other Typical Current-Mode Converters} \label{ref:sec_design_eqn}
We directly use the settling $N_w$ as the metric for settling cycles,
\begin{align} \label{eqn:settlecycle_repeat}
    N_w \triangleq \text{max} \bigg\{\bigg|\frac{4}{\text{ln}(|a_{\text{min}}|)}\bigg|,\bigg|\frac{4}{\text{ln}(|a_{\text{max}}|)}\bigg|\bigg\}.
\end{align}
The worst-case overshoot $O_w$ follows
\begin{align} \label{eqn:os1p1z_repeat}
    O_w \triangleq \text{max}\bigg\{\frac{b-a_{\text{min}}}{1-b}\,,\,0\bigg\}.
\end{align}
The design equations of the constant on-time current-mode converter, constant off-time current-mode converter, fixed-frequency peak current-mode converter and fixed-frequency valley current-mode converter are illustrated in Tables\,\ref{table:cont_off_cm_theory},
\ref{table:cofft_on_cm_theory}, 
\ref{table:ff_peak_cm_theory}, and \ref{table:ff_valley_cm_theory}.
\begin{table*}[tb]
    \caption{Design Theory and Equations for Constant Off-Time Current-Mode Converters}
    \label{table:cont_off_cm_theory}
    \centering
    \begin{tabular}
    {|m{2.4in}|m{1.025in}|m{1.025in}|m{0.55in}|}
        \hline
        \textbf{Stability Criteria} & $\mathbf{a_{\text{min}}}$ & $\mathbf{a_{\text{max}}}$ & \textbf{b} \\
        \hline
        \vspace{3pt}
        $\Lambda_{ub} \le \frac{m_1}{2}$
        \vspace{2pt}
        & $1 - \frac{m_1}{(m_1 - \Lambda_{ub})}$
        \vspace{2pt}
        & $1 - \frac{m_1}{(m_1 + \Lambda_{ub})}$
        \vspace{2pt}
        & 0 \\
        \hline
    \end{tabular}
\end{table*}

\begin{table*}[tb]
    \caption{Design Theory and Equations for the Constant On-Time Current-Mode Converters}
    \label{table:cofft_on_cm_theory}
    \centering
    \begin{tabular}
    {|m{2.4in}|m{1.025in}|m{1.025in}|m{0.55in}|}
        \hline
        \textbf{Stability Criteria}& $\mathbf{a_{\text{min}}}$ & $\mathbf{a_{\text{max}}}$ & \textbf{b} \\
        \hline
        \vspace{3pt}
        $\Lambda_{ub} \le \frac{m_2}{2}$
         \vspace{2pt}
        & $1 - \frac{m_2}{(m_2 - \Lambda_{ub})}$
         \vspace{2pt}
        & $1 - \frac{m_2}{(m_2 + \Lambda_{ub})}$
         \vspace{2pt}
        & 0 \\
         \hline
    \end{tabular}
\end{table*}

\begin{table*}[tb]
    \caption{Design Theory and Equations for the Fixed-Frequency Peak Current-Mode Converters}
    \label{table:ff_peak_cm_theory}
    \centering
    \begin{tabular}
    {|m{2.4in}|m{1.025in}|m{1.025in}|m{0.55in}|}
        \hline
        \textbf{Stability Criteria}& $\mathbf{a_{\text{min}}}$ & $\mathbf{a_{\text{max}}}$ & \textbf{b} \\
        \hline
        \vspace{3pt}
        $\Lambda_{ub} \le \frac{m_1 - m_2}{2}$ 
        \vspace{2pt}
        & $\frac{-\Lambda_{ub}-m_2}{(m_1 - \Lambda_{ub})}$
        \vspace{2pt}
        & $\frac{\Lambda_{ub} - m_2}{(m_1 + \Lambda_{ub})}$
        \vspace{2pt}
        & $-\frac{m_2}{m_1}$
        \vspace{2pt} \\
         \hline
    \end{tabular}
\end{table*}

\begin{table*}[tb]
    \caption{Design Theory and Equations for the Fixed-Frequency Valley Current-Mode Converters}
    \label{table:ff_valley_cm_theory}
    \centering
    \begin{tabular}
    {|m{2.4in}|m{1.025in}|m{1.025in}|m{0.55in}|}
        \hline
        \textbf{Stability Criteria}& $\mathbf{a_{\text{min}}}$ & $\mathbf{a_{\text{max}}}$ & \textbf{b} \\
        \hline
        \vspace{3pt}
        $\Lambda_{ub} \le \frac{m_2 - m_1}{2}$ 
        \vspace{2pt}
        & $\frac{-\Lambda_{ub}-m_1}{(m_2 - \Lambda_{ub})}$
        \vspace{2pt}
        & $\frac{\Lambda_{ub} - m_1}{(m_2 + \Lambda_{ub})}$
        \vspace{2pt}
        & $-\frac{m_1}{m_2}$ 
        \vspace{2pt}
        \\
         \hline
    \end{tabular}
\end{table*}

\section{Sector Boundedness of Static Mapping $\mathcal{T}$} \label{ref:sec_boun_Tmap}
\begin{proof}
We denote the bandwidth limit of interference $w(t)$ by $f_{ub}$ and amplitude limit of the interference by $A_{ub}$.
From Bernstein's inequality \cite{Lapidoth2009},
\begin{align}
    \Bigg | \frac{\text{d}\,w(t)}{\text{d}\,t} \Bigg |\le 4\pi f_{ub} A_{ub} \triangleq \Lambda_{ub}, \: \forall t \in \mathbb{R}.
\end{align}
We linearly transform the static mapping $\mathcal{T}$ to the origin, which results in $\tilde{\mathcal{T}}$
\begin{align}
    i_c  = I_c & + \tilde{i}_c,   \\
    i_p  = I_p & + \tilde{i}_p,   \\
    \mathcal{T}:i_c & \rightarrow i_p  \label{eqn:defT_2},\\
    \tilde{\mathcal{T}}: \tilde{i}_c & \rightarrow \tilde{i}_p \label{eqn:deftildeT_2}.
\end{align}
The relationship between $\tilde{\mathcal{T}}$ and $\psi$ follows
\begin{align}
        \frac{\text{d}\,(\tilde{\mathcal{T}}^{-1})}{\text{d}\,x}  & = 1 + G_0\psi^{'} \label{eqn:devTinvx_1},\:
        \text{where}\; \psi^{'} = \frac{\text{d}\,\psi(x)}{\text{d}\,x},\\
       \frac{\text{d}\,\tilde{\mathcal{T}}}{\text{d}\,x} & = \frac{1}{1 + G_0\psi^{'}}\label{eqn:devTx_1},
\end{align}
\begin{align}
         \psi^{'} & = \frac{1}{G_0} \left( \frac{1}{\frac{\text{d}\,\tilde{\mathcal{T}}}{\text{d}\,x}} - 1  \right) = \frac{1}{G_0} \left( \frac{1}{\frac{\text{d}\,\mathcal{T}}{\text{d}\,x}} - 1  \right) \label{eqn:devpsi_1}.
\end{align}
The relationship between $\psi(t)$ and $w(t)$ follows
\begin{align}
     \psi(t) & = w(t+T_{\text{on}}) - w(T_{\text{on}}) \label{eqn:psit_wt_1},\\
     \frac{\text{d}\,\psi(t)}{\text{d}\,t} & = \frac{\text{d}\,w(t)}{\text{d}\,t}, \\
     -\Lambda_{ub} \le  & \frac{\text{d}\,w(t)}{\text{d}\,t} \le  \Lambda_{ub} \label{eqn:dpsit_dwt_1}.
\end{align}
From (\ref{eqn:dpsit_dwt_1}) and (\ref{eqn:devTx_1})
\begin{align}
         \frac{1}{1 + \Lambda_{ub} G_0} \le \frac{\text{d}\,\tilde{\mathcal{T}}}{\text{d}\,x} \le \frac{1}{1 - \Lambda_{ub} G_0}   \label{eqn:devTx_1_1},
\end{align}
\begin{align}
     \tilde{\mathcal{T}}(x) = \tilde{\mathcal{T}}(x) -  \tilde{\mathcal{T}}(0) = \int_0^{x} \frac{\text{d} \tilde{\mathcal{T}}}{\text{d} x}\,\text{d} x. \label{eqn:boundTtilde}
\end{align}
Therefore, we have proven $\tilde{\mathcal{T}}$ is sector-bounded
\begin{align}
    \frac{x}{1 + \Lambda_{ub} G_0} \le \tilde{\mathcal{T}}(x) \le \frac{x}{1 - \Lambda_{ub} G_0}.
\end{align}
We have proven $\mathcal{T}$ is $\alpha$ sector-bounded with
\begin{align}
    \alpha &= I_p, \\
    K_{lb} &= \frac{1}{1 + \Lambda_{ub} G_0}, \quad 
    K_{ub} = \frac{1}{1 - \Lambda_{ub} G_0}.
\end{align}
\end{proof}

\section{Proof of Theorem \ref{th:equexcond}}  \label{proof:equexcond}
\begin{proof}
If first-event-triggering with latching is used, $\mathcal{T}$ is a monotonically increasing mapping. Therefore, the proposition that $\mathcal{T}$ is continuous is equivalent to the proposition that $\mathcal{T}$ is onto. We first prove that if the current sensor output \mbox{$m_{1}t+w(t)$} is strictly monotonic, $\mathcal{T}$ is an onto mapping. We next prove that if $\mathcal{T}$ is an onto mapping, the current sensor output $m_{1}t+w(t)$ is strictly monotonic. The contradiction method is used in the proof.

1. The current sensor output $m_{1}t+w(t)$ is strictly \\
monotonic $\Longrightarrow$ $\mathcal{T}$ is an onto function.\\
Given any $I_p$, there uniquely exists an on-time $t_{\text{on}}$ which satisfies
\begin{align}
  I_{p}= I_{v} + m_{1}t_{\text{on}}.
\end{align}
Because the measured inductor current waveform $m_{1}t+w(t)$ is strictly monotonic, given any $t_{\text{on}}$, there uniquely exists a current command $I_c$ which achieves \,$t_{\text{on}}$ by following the first-event-triggering with latching mechanism
\begin{align}
I_c = I_v+m_{1}t_{\text{on}}+w(t_{\text{on}}),
\end{align}
where $I_{v}$ is the valley current. To consolidate, given any $I_p$, there uniquely exists a $t_{\text{on}}$ and given any $t_{\text{on}}$, there uniquely exists an $I_c$.
Therefore, we have proven that $\mathcal{T}$ is an onto function.

2. $\mathcal{T}$ is an onto function $\Longrightarrow$  the measured inductor current waveform $ m_{1}t+w(t)$ is strictly monotonic.\\
We prove it by contradiction. We assume that $m_{1}t+w(t)$ is not strictly monotonic, hence there exist $t_{\text{on1}}$ and $t_{\text{on2}}$ such that
\begin{equation}
\begin{aligned}
t_{\text{on2}} >  t_{\text{on1}} \Longrightarrow m_{1}t_{\text{on2}}+w(t_{\text{on2}}) \leq m_{1}t_{\text{on1}}+w(t_{\text{on1}}).
\end{aligned}
\end{equation}
The corresponding $I_{p1}$ of $t_{\text{on1}}$ and $I_{p2}$ of $t_{\text{on2}}$ are 
\begin{equation}
\begin{aligned}
I_{p1}= I_{v} + m_{1}t_{\text{on1}}, \quad  \quad I_{p2}= I_{v} + m_{1}t_{\text{on2}}. 
\end{aligned}
\end{equation}
Because $\mathcal{T}$ is an onto function,
there exist $ I_{c1},I_{c2}$ such that
\begin{align}
    \mathcal{T}(I_{c1})= I_{p1}, \quad \quad
    \mathcal{T}(I_{c2})= I_{p2}.
\end{align}
From the definition of $I_{c1}$ and $I_{c2}$,
\begin{align}
    I_{c1} &= I_{v} + m_{1}t_{\text{on1}}+w(t_{\text{on1}}), \nonumber \\
    I_{c2} &= I_{v} + m_{1}t_{\text{on2}}+w(t_{\text{on2}}) \label{eqn:icexp}, 
\end{align}
where $I_{v}$ is the valley current. \\
Because of the first-event-trigger with latching mechanism, for $I_{c2} > I_{c1}$, we have  $t_{\text{on2}} > t_{\text{on1}}$. From (\ref{eqn:icexp}), we obtained
\begin{equation}
\begin{aligned}
 m_{1}t_{\text{on2}}+w(t_{\text{on2}}) > m_{1}t_{\text{on1}}+w(t_{\text{on1}}), \\
\end{aligned}
\end{equation}
which contradicts the assumption that $m_{1}t_{\text{on1}}+w(t_{\text{on1}})$ is not strictly monotonic. Hence we have proven that if $\mathcal{T}$ is an onto function, the measured inductor current waveform $m_{1}t+w(t)$ is strictly monotonic.
\end{proof}

\end{appendices}
\newpage
{\setstretch{1}\vspace{\baselineskip}
\bibliographystyle{ieeetr}
\bibliography{main.bib}
}

\end{document}